\documentclass[a4paper]{article}

\usepackage{
amsmath,
amsthm,
amscd,
amssymb,
}
\usepackage{listings}

\usepackage{xint}
\def\allowsplits #1{\ifx #1\relax \else #1\hskip 0pt plus 1pt\relax
\expandafter\allowsplits\fi}%
\def\printnumber #1{\expandafter\allowsplits \romannumeral-`0#1\relax }%

\usepackage{comment}
\usepackage{tikz}
\usetikzlibrary{positioning,arrows,calc}
\usepackage{xspace}

\setcounter{tocdepth}{3}
\usepackage{graphicx}

\usepackage{url}

\theoremstyle{plain}
\newtheorem{lemma}{Lemma}
\newtheorem{definition}{Definition}
\newtheorem{corollary}{Corollary}
\newtheorem{proposition}{Proposition}
\newtheorem{theorem}{Theorem}

\newtheorem{remark}{Remark}

\newtheoremstyle{derp}
{3pt}
{3pt}
{}
{}
{\upshape}
{:}
{.5em}
{}
\theoremstyle{derp}
\newtheorem{example}{Example}

\newcommand{\Z}{\mathbb{Z}}

\newcommand{\N}{\mathbb{N}}

\newcommand{\B}{\mathcal{L}}

\newcommand{\INF}{{}^\infty}
\newcommand{\ID}{\mathrm{id}}

\newcommand\xqed[1]{%
  \leavevmode\unskip\penalty9999 \hbox{}\nobreak\hfill
  \quad\hbox{#1}}
\newcommand\qee{\xqed{$\triangle$}}

\newcommand{\End}{\mathrm{End}}

\title{Von Neumann regularity, split epicness and elementary cellular automata}

\author{
Ville Salo \\
vosalo@utu.fi
}

\begin{document}
\maketitle

\begin{abstract}
We show that a cellular automaton on a mixing subshift of finite type is a Von Neumann regular element in the semigroup of cellular automata if and only if it is split epic onto its image in the category of sofic shifts and block maps. It follows from [S.-T\"orm\"a, 2015] that Von Neumann regularity is decidable condition, and we decide it for all elementary CA.
\end{abstract}

\section{Introduction}


The Von Neumann regular elements -- elements $a$ having a weak inverse $b$ such that $aba = a$ -- of cellular automaton (CA) semigroups are studied in \cite{CaGa18}. We show that in the context of cellular automata on mixing subshifts of finite type, Von Neumann regularity coincides with the notion of split epicness onto the image, another generalized invertibility notion from category theory.

Question 1 of \cite{CaGa18} asks which of the so-called elementary cellular automata (ECA) are Von Neumann regular. They determine this for all ECA except (ones isomorphic to) those with numbers 6, 7, 9, 23, 27, 28, 33, 41, 57, 58 and 77, see the next section for the definition of the numbering scheme. 

What makes this question interesting is that Von Neumann regularity is not a priori decidable -- clearly checking if $g$ is a weak inverse is semidecidable, but it is not immediately clear how to semidecide the nonexistence of a weak inverse. However, split epicness has been studied previously in \cite{SaTo15a}, and in particular it was shown there that split epicness of a morphism between two sofic shifts is a decidable condition. This means Question 1 of \cite{CaGa18} can in theory be decided algorithmically.

As the actual bound stated in \cite{SaTo15a} is beyond astronomical, it is an interesting question whether the method succeeds in actually deciding each case. With a combination of this method, computer and manual searches, and some ad hoc tricks, we prove that ECA 6, 7, 23, 33, 57 and 77 are Von Neumann regular, while 9, 27, 28, 41 and 58 are not, answering the remaining cases of Question 1 of \cite{CaGa18}.

The Von Neumann regular CA on this list have weak inverses of radius at most five. Non-regularity is proved in each case by looking at eventually periodic points of eventual period one. The non-regularity of all but ECA 9 and ECA 28 can be proved by simply observing that their images are proper sofic, though we also explain why they are not regular using the method of \cite{SaTo15a}.

\section{Preliminaries}

The \emph{full shift} is $\Sigma^\Z$ where $\Sigma$ is a finite alphabet, which is a dynamical system under the shift $\sigma(x)_i = x_{i+1}$. Its subsystems (closed shift-invariant subsets) are called \emph{subshifts}. A \emph{cellular automaton (CA)} is a shift-commuting continuous function $f : X \to X$ on a subshift $X$. The cellular automata on a subshift $X$ form a monoid $\End(X)$. A CA $f$ is \emph{reversible} if $\exists g: f \circ g = g \circ f = \ID$.

A cellular automaton has a local rule, that is, there exists a \emph{radius} $r \in \N$ such that $f(x)_i$ is determined by $x|_{[i-r,i+r]}$ for all $x \in X$ (and does not depend on $i$). The \emph{elementary cellular automata} (ECA) are the CA on the binary full shift $\{0,1\}^\Z$ which can be defined with radius $1$. There is a numbering scheme for such CA: If $n \in [0,255]$ has base $2$ representation $b_7b_6...b_1b_0$, then ECA number $n$ is the one mapping
\[ f(x)_i = 1 \iff b_{(x_{[i-1,i+1]})_2} = 1 \]
where $(x_{[i-1,i+1]})_2$ is the number $x_{[i-1,i+1]}$ represents in base $2$.
This numbering scheme is from \cite{Wo83}.

The usage of base 10 in this notation is standard, and many CA researchers remember ECA by these numbers. However, for clarity we switch to hexadecimal notation from radius $2$ upward.

A subshift can be defined by forbidding a set of words from appearing, and this is in fact a characterization of subshifts. A subshift is \emph{of finite type} or \emph{SFT} if it can be defined by a finite set of forbidden patterns, and \emph{sofic} if it can be defined by a forbidden regular language in the sense of automata and formal languages.

The \emph{language} of a subshift is the set of words that appear in its configurations. A subshift $X$ is \emph{mixing} if for all words $u, v$ appearing in the language of $X$, for all large enough $n$ some word $uwv$ with $|w| = n$ appears in the language of $X$.

See standard references for more information on symbolic dynamics \cite{LiMa95} or automata theory \cite{HoMoUl06}.

\section{Split epicness and Von Neumann regularity}

In this section, we show split epicness and Von Neumann regularity are equal concepts on mixing SFTs. On the full shift, this is simply a matter of defining these terms.

If $S$ is a semigroup, then $a \in S$ is \emph{(Von Neumann) regular} if $\exists b \in S: a b a = a \wedge b a b = b$. We say $b$ is a \emph{generalized inverse} of $a$. If $aba = a$, then $b$ is a \emph{weak generalized inverse} of $a$.

\begin{lemma}
If $a$ has a weak generalized inverse, then it has a generalized inverse and thus is regular
\end{lemma}

\begin{proof}
If $aba = a$, then letting $c = bab$, we have
$aca = ababa = aba = a$
and
$cac = bababab = babab = bab = c$.
\end{proof}

If $\mathcal{C}$ is a category, a morphism $f : X \to Y$ is \emph{split epic} if there is a morphism $g : Y \to X$ such that $f \circ g = \ID_Y$. Such a $g$ is called a \emph{right inverse} or a \emph{section}.

Note that in general category-theoretic concepts depend on the particular category at hand, but if $\mathcal{C}$ is a full subcategory of $\mathcal{D}$ (meaning a subcategory induced by a subclass of the objects, by taking all the morphisms between them), then split epicness for a morphism $f : X \to Y$ where $X, Y$ are objects of $\mathcal{C}$ means the same in both.

We are in particular interested in the category $K3$ (in the naming scheme of \cite{SaTo15a}) with sofic shifts as objects, and \emph{block maps}, i.e. shift-commuting continuous functions $f : X \to Y$ as morphisms.

The following theorem is essentially only a matter of translating terminology, and works in many concrete categories.

\begin{theorem}
Let $X$ be a sofic shift, and $f : X \to X$ a cellular automaton. Then the following are equivalent:
\begin{itemize}
\item $f : X \to f(X)$ has a right inverse $g : f(X) \to X$ which can be extended to a morphism $h : X \to X$ such that $h|_{f(X)} = g$,
\item $f$ is regular as an element of $\End(X)$.
\end{itemize}
\end{theorem}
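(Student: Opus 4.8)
The plan is to treat the statement as a dictionary translation between the semigroup notion of regularity and the categorical notion of a section onto the image, and to prove the two implications directly by unwinding the definitions. The one external ingredient I would invoke is the preceding Lemma: since possessing a weak generalized inverse already forces regularity, it suffices to match the first bullet with the existence of some $h \in \End(X)$ satisfying $f \circ h \circ f = f$, which is more convenient to manipulate than the symmetric pair of identities.

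For the first direction (first bullet implies regularity), suppose $g : f(X) \to X$ is a right inverse of $f : X \to f(X)$ and $h : X \to X$ extends it, i.e. $h|_{f(X)} = g$. I would evaluate $f \circ h \circ f$ at an arbitrary $x \in X$: the point $f(x)$ lies in $f(X)$, so $h(f(x)) = g(f(x))$, and then applying $f$ and using $f \circ g = \ID_{f(X)}$ on the point $f(x) \in f(X)$ yields $f(h(f(x))) = f(g(f(x))) = f(x)$. Hence $f \circ h \circ f = f$, so $h$ is a weak generalized inverse and $f$ is regular by the Lemma.

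For the converse, suppose $f$ is regular; by the Lemma I may fix $h \in \End(X)$ with $f \circ h \circ f = f$, and I would then simply restrict, setting $g = h|_{f(X)}$. To confirm the first bullet I must verify three things. First, $f(X)$ is a legitimate object of the category, which holds because the image of a sofic shift under a block map is again sofic. Second, $g$ is a morphism $f(X) \to X$, which is immediate since the restriction of the block map $h$ to the subshift $f(X)$ is a block map into $X$ (the codomain is $X$, and we do not need $f(X)$ to be $h$-invariant). Third, $f \circ g = \ID_{f(X)}$: any $y \in f(X)$ has the form $y = f(x)$, whence $f(g(y)) = f(h(f(x))) = (f \circ h \circ f)(x) = f(x) = y$. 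Finally $h$ extends $g$ by construction, closing this direction.

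The content here is entirely formal, so I do not expect a genuine obstacle; the only point deserving care — and the only place the hypothesis that $X$ is sofic enters — is the converse, where one observes that $f(X)$ is sofic (hence an object of the category) and that $h|_{f(X)}$ is a bona fide morphism. Everything else is a routine chain of substitutions.
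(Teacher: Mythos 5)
Your proof is correct and takes essentially the same approach as the paper: one direction restricts a (weak) generalized inverse $h$ to $g = h|_{f(X)}$ and checks $f \circ g = \ID_{f(X)}$, and the other direction computes $f \circ h \circ f = f$ for an extension $h$ of the section $g$ and invokes the preceding Lemma to upgrade the weak generalized inverse to a generalized inverse. Your additional remarks (that $f(X)$ is sofic and that the restriction of a block map is a block map) are correct refinements of points the paper passes over silently.
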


\begin{proof}
Suppose first that $f$ is regular, and $h \in \End(X)$ satisfies $fhf = f$ and $hfh = h$. Then the restriction $g = h|_{f(X)} : f(X) \to X$ is still shift-commuting and continuous, and $\forall x: fg(f(x)) = f(x)$ implies that for all $y \in f(X)$, $fg(y) = y$, i.e. $g$ is a right inverse for the codomain restriction $f : X \to f(X)$ and it extends to the map $h : X \to X$ by definition.

Suppose then that $fg = \ID_{f(X)}$ for some $g : f(X) \to X$, as a right inverse of the codomain restriction $f : X \to f(X)$. Let $h : X \to X$ be such that $h|_{f(X)} = g$, which exists by assumption. Then $fh(f(x)) = fg(f(x)) = f(x)$. Thus $f$ is regular, and that $hfh$ is a generalized inverse for it.
\end{proof}

Note that when $X$ is a full shift, extending morphisms is trivial: simply fill in the local rule arbitrarily. The Extension Lemma generalizes this idea:

\begin{theorem}
Let $X$ be a mixing SFT, and $f : X \to X$ a cellular automaton. Then the following are equivalent:
\begin{itemize}
\item $f : X \to f(X)$ is split epic in $K3$.
\item $f$ is regular as an element of $\End(X)$.
\end{itemize}
\end{theorem}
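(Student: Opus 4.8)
The plan is to deduce the statement from the preceding theorem, which already characterizes regularity of $f$ by the existence of a right inverse $g : f(X) \to X$ of the codomain restriction \emph{that extends} to a morphism $h : X \to X$ with $h|_{f(X)} = g$. Thus the two bullets differ only in whether the section is required to extend to all of $X$, and the entire content of the statement is that, on a mixing SFT, this extension comes for free. Note also that $f(X)$ is sofic, being the image of a sofic shift under a block map, so the codomain restriction $f : X \to f(X)$ and any candidate $g : f(X) \to X$ are genuinely morphisms of $K3$.

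One direction is immediate: if $f$ is regular, the preceding theorem supplies a right inverse $g : f(X) \to X$, and forgetting the extension shows that $f : X \to f(X)$ is split epic in $K3$. For the converse, suppose $f$ is split epic, so there is a block map $g : f(X) \to X$ with $f \circ g = \ID_{f(X)}$. By the preceding theorem it suffices to extend $g$ to some $h : X \to X$, and then $hfh$ is a generalized inverse. I would obtain the extension from the Extension Lemma: since $X$ is a mixing SFT, a block map defined on the subshift $f(X) \subseteq X$ and valued in $X$ extends to all of $X$ once the only real obstruction -- realizability of periodic orbits -- is checked. Here domain and target coincide, so a periodic point $x \in X$ of least period $p$ admits $x$ itself as a candidate image; mixing guarantees points of every sufficiently large period, and the values already pinned down by $g$ on periodic points of $f(X)$ are internally consistent because $g$ is shift-commuting. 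Extending and applying the preceding theorem then gives regularity.

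The main obstacle is the invocation of the Extension Lemma: one must state it in exactly the form needed -- extending a block map from a subshift of the mixing SFT $X$ into $X$ -- and verify its periodic-point hypotheses. The delicate point is not producing an image for any single periodic orbit (mixing supplies these in abundance) but confirming that the finitely many compatibility constraints already imposed by the chosen section $g$ do not clash with the interpolation freedom the gluing argument requires. This is precisely where mixing of $X$, rather than mere irreducibility or the bare SFT property, is used: an irreducible SFT realizes only periods in a fixed arithmetic progression, whereas mixing yields all sufficiently large periods and hence enough room to carry out the extension.
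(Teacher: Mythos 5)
Your proposal is correct and takes essentially the same route as the paper: reduce to the preceding theorem, then extend the section $g : f(X) \to X$ to a morphism $h : X \to X$ via the Extension Lemma, whose periodic-point hypothesis $X \searrow X$ holds trivially because every periodic point of $X$ serves as its own candidate image. The only difference is one of emphasis: you treat the consistency of the extension with the values already forced by $g$ as a delicate point to check, whereas this is guaranteed by the statement of the Extension Lemma itself once its (here trivial) hypotheses are verified.
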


\begin{proof}
It is enough to show that any right inverse $g : f(X) \to X$ can be extended to $h : X \to X$ such that $h|_{f(X)} = g$. By the Extension Lemma \cite{LiMa95}, it is enough to show that $X \searrow X$, i.e. for every point $x \in X$ with minimal period $p$, there is a point $y \in X$ with minimal period dividing $p$. This holds trivially.
\end{proof}

\section{Deciding split epicness}

We recall the characterization of split epicness \cite[Theorem~1]{SaTo15a}. This is Theorem~\ref{thm:SplitEpicSolved} below.

\begin{definition}
\label{def:SPP}
Let $X, Y$ be subshifts and let $f : X \to Y$ be a morphism. Define
\[ \mathcal{P}_p(Y) = \{ u \in \B(Y) \;|\; \INF u \INF \in Y, |u| \leq p \}. \]
We say $f$ satisfies the \emph{strong $p$-periodic point condition} if there exists a length-preserving function $G : \mathcal{P}_p(Y) \to \B(X)$ such that for all $u, v \in \mathcal{P}_p(Y)$ and $w \in \B(Y)$ with $\INF u .w v \INF \in Y$, there exists an $f$-preimage for $\INF u .w v \INF$ of the form
$\INF G(u) w' . w'' w''' G(v) \INF \in X$
where $|u|$ divides $|w'|$, $|v|$ divides $|w'''|$ and $|w| = |w''|$. The \emph{strong periodic point condition} is that the strong $p$-periodic point condition holds for all $p \in \N$.
\end{definition}

The strong periodic point condition is an obvious necessarily condition for having a right inverse, as the right inverse must consistently pick preimages for periodic points, and they must satisfy these properties; \cite[Theorem~1]{SaTo15a}, proves that it is also sufficient in the case when $X$ is an SFT and $Y$ is a sofic shift.

\begin{theorem}
\label{thm:SplitEpicSolved}
Given two objects $X \subset S^\Z$ and $Y \subset R^\Z$ and a morphism $f : X \to Y$ in K3, it is decidable whether $f$ is split epic. If $X$ is an SFT, split epicness is equivalent to the strong periodic point condition.
\end{theorem}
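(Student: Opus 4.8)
The plan is to prove the two assertions separately and to treat the equivalence with the strong periodic point condition (for $X$ an SFT) as the core, since decidability will follow from it once the universal quantifier over $p$ is reduced to a finite check.

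I would first dispatch the direction \emph{split epic $\Rightarrow$ strong periodic point condition}, which is essentially bookkeeping and is already indicated after Definition~\ref{def:SPP}. Given a right inverse $g : Y \to X$ of finite radius, each $g(\INF u\INF)$ with $u \in \mathcal{P}_p(Y)$ is fixed by $\sigma^{|u|}$ and hence periodic with period dividing $|u|$; reading off the length-$|u|$ block $g(\INF u\INF)_{[0,|u|-1]}$ defines the length-preserving map $G$. For $\INF u.wv\INF \in Y$ the preimage $g(\INF u.wv\INF)$ is left-asymptotic to $\INF G(u)\INF$ and right-asymptotic to $\INF G(v)\INF$, and because $g$ sees only copies of $u$ sufficiently far to the left and only copies of $v$ sufficiently far to the right, this preimage already has the prescribed form $\INF G(u)w'.w''w'''G(v)\INF$ with the stated divisibility and length constraints.

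The substantial direction is the converse: assuming the strong periodic point condition, I must build a block map $g$ with $f\circ g = \ID_Y$. Here I would fix a defining window for the SFT $X$ and the radius of $f$, and attempt to read off a local rule for $g$ from bounded windows of its argument. The finite data $G$ together with the guaranteed preimages of bi-eventually-periodic points supply consistent boundary conditions, and a compactness argument should glue the locally chosen preimage words into a genuine point of $X$ — crucially using that $X$ is an SFT, so that membership in $X$ is a purely local condition and no long-range obstruction can arise, while the finite radius of $f$ makes $f(g(y)) = y$ a local condition as well. The delicate point, which I expect to be the main obstacle, is \emph{uniformity}: the strong periodic point condition asserts the existence of preimages for individual (eventually periodic) points, whereas a section must choose preimages coherently and with bounded dependence on the input. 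Overcoming this requires showing that the relevant choices stabilize on windows of bounded size, and this is exactly where a pumping argument on the automata presenting $X$, $Y$ and $f$ enters, replacing long periods by combinations of short ones and thereby bounding the period lengths $p$ that carry genuine information.

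That same pumping bound yields decidability when $X$ is an SFT: one shows that the strong $p$-periodic point condition for all $p$ is equivalent to its holding for all $p \le N$ with $N$ computable from the presentations, and for each such $p$ the set $\mathcal{P}_p(Y)$ is finite and effectively computable, while ``there exists a preimage of the prescribed form'' is a regular, hence decidable, predicate in the given automata; thus the existence of a suitable $G$ reduces to a finite constraint-satisfaction problem. Finally, for a general sofic (non-SFT) domain $X$ I would reduce to the SFT case by taking an SFT cover $\phi : \tilde X \to X$ and arguing that $f$ is split epic if and only if $f\circ\phi : \tilde X \to Y$ is: the implication from $f$ to $f\circ\phi$ needs a lift of the section through $\phi$, while the reverse is immediate, since a section $\tilde g$ of $f\circ\phi$ gives $f(\phi\circ\tilde g) = \ID_Y$.
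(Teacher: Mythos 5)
First, some context for the comparison: the paper does \emph{not} prove this theorem at all. It is recalled verbatim from \cite{SaTo15a} (Theorem~1 there), with only the surrounding remarks that necessity of the strong periodic point condition is obvious, and that the reference establishes sufficiency (for SFT domains) and decidability, the latter by producing a computable --- though astronomical --- upper bound on the radius of a minimal right inverse and then checking all finitely many candidate local rules. Judged against that, your proposal gets the easy direction right (your construction of $G$ from a section $g$, via $G(u) = g(\INF u \INF)_{[0,|u|-1]}$ and shift-commutation forcing the asymptotic form of $g(\INF u . w v \INF)$, is exactly the ``obvious'' argument the paper alludes to), but the two substantive claims are not proved. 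The core direction --- strong periodic point condition $\Rightarrow$ split epic when $X$ is an SFT --- is deferred in your sketch to ``a compactness argument'' plus ``a pumping argument.'' Compactness cannot produce a section: a section is a sliding block code, i.e.\ a single bounded-radius rule making globally coherent choices, and passing from the pointwise existence of preimages of eventually periodic points (which is all the hypothesis gives) to such a uniform rule is precisely the content of the theorem, not a gluing step. You correctly name uniformity as the obstacle, but supply no mechanism for overcoming it; the reference's proof is an explicit construction (decomposing an arbitrary configuration of $Y$ into long low-period blocks and transition regions, defining the preimage there via $G$ and the guaranteed preimages of the points $\INF u . w v \INF$, and verifying overlap consistency), and it is this same effective construction that yields the bounds on which decidability rests. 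Your decidability argument likewise presupposes a computable period bound $N$, which would only emerge from that missing effective argument, so it cannot be audited independently.

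Second, your reduction of the general sofic domain to the SFT case is not just incomplete but false as stated. The claimed equivalence ``$f$ split epic $\iff$ $f \circ \phi$ split epic'' fails in the forward direction, because sections need not lift through an SFT cover: covers need not have periodic points of the same least period above periodic points of $X$. Concretely, let $X = Y$ be the even shift, $f = \ID_X$ (certainly split epic), and $\phi : \tilde X \to X$ the standard cover by the edge shift of the graph with vertices $a, b$ and edges $a \to a$ labelled $1$, $a \to b$ and $b \to a$ labelled $0$. A section $\psi$ of $f \circ \phi = \phi$ would have to send the fixed point $0^\Z$ to a fixed point of $\tilde X$ lying over it, but the $\phi$-preimages of $0^\Z$ are the two points of least period $2$ alternating between the $0$-labelled edges; so $f \circ \phi$ is not split epic although $f$ is. Hence deciding split epicness of $f \circ \phi$ does not decide it for $f$, and the sofic case of the theorem needs a genuinely different treatment (as it receives in \cite{SaTo15a}).
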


We note that Definition~\ref{def:SPP} is equivalent to a variant of it where $G$ is only defined on Lyndon words \cite{Lo02}, i.e. lexicographically minimal representative words of periodic orbits: if $G$ is defined on those, it can be extended to all of $\mathcal{P}_p$ in an obvious way, and the condition being satisfied by minimal representatives implies it for all eventually periodic points.

\begin{remark}
It is observed in \cite[Theorem~1]{CaGa18} that if $f : X \to Y$ is split epic, then every periodic point in $Y$ must have a preimage of the same period in $X$ -- this is a special case of the above, and could thus be called the \emph{weak periodic point condition}. In \cite[Example~5]{SaTo15a}, an example is given of morphism between mixing SFTs which satisfies the weak periodic point condition but not the strong one. We have not attempted to construct an example of a CA on a full shift which has this property onto its image, and we did not check whether any non-regular ECA satisfies it. In \cite[Theorem~4]{CaGa18}, for full shifts on finite groups, the weak periodic point condition is shown to be equivalent to split epicness (when CA are considered to be morphisms onto their image). In the context of CA on $\Z^2$, there is no useful strong periodic point condition in the sense that split epicness is undecidable, see Corollary~\ref{cor:2D}. 
\end{remark}

In the proof of Theorem~\ref{thm:SplitEpicSolved} in \cite{SaTo15a}, decidability is obtained from giving a bound on the radius of a minimal inverse, and a very large one is given, as we were only interested in the theoretical decidability result. The method is, however, quite reasonable in practise:
\begin{itemize}
\item To semidecide non-(split epicness), look at periodic points one by one, and try out different possible choices for their preimages. Check by automata-theoretic methods (or ``by inspection'') which of these are consistent in the sense of Definition~\ref{def:SPP}.
\item To semidecide split epicness, invent a right inverse -- note that here we can use the other semialgorithm (running in parallel) as a tool, as it tells us more and more information about how the right inverse must behave on periodic points, which tells us more and more values of the local rule.
\end{itemize}
One of these is guaranteed to finish eventually by \cite{SaTo15a}.

Proposition~\ref{prop:soficnotsplit} below is a slight generalization of Proposition~1 in \cite{SaTo15a}. We give a proof here, as the proof in \cite{SaTo15a} unnecessarily applies a more difficult result of S. Taati (and thus needs the additional assumption of ``mixing''). This Proposition allows us to obtain non-regularity of all of the non-regular ECA considered here apart from ECA 9 and 28, though we do also provide a strong periodic point condition argument for all the non-regular ECA.

\begin{lemma}
If $X$ is an SFT and $f : X \to X$ is idempotent, i.e. $f^2 = f$, then $f(X)$ is an SFT.
\end{lemma}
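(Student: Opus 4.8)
The plan is to identify the image $f(X)$ with the set of fixed points of $f$, and then to observe that this fixed-point set is cut out by finitely many local constraints, so that it is an SFT.

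First I would use idempotency to show that $f$ restricts to the identity on its own image. Indeed, if $y \in f(X)$, then $y = f(x)$ for some $x \in X$, and hence $f(y) = f(f(x)) = f^2(x) = f(x) = y$. Thus every point of $f(X)$ is a fixed point of $f$. Conversely, any $y$ with $f(y) = y$ lies in $f(X)$ by definition. Therefore $f(X) = \{x \in X : f(x) = x\}$, the set of fixed points of $f$ in $X$.

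Next I would argue that this fixed-point set is an SFT, because the condition $f(x) = x$ is local. Let $r$ be a radius for $f$, so that $f(x)_i$ is determined by the window $x|_{[i-r,i+r]}$ via the local rule $F$; the equation $f(x)_i = x_i$ then becomes $F(x|_{[i-r,i+r]}) = x_i$, a constraint on a single window of length $2r+1$. Concretely, I would forbid exactly those words $w$ of length $2r+1$ whose central symbol differs from $F(w)$. A configuration $x \in X$ satisfies $f(x) = x$ if and only if none of these forbidden windows occurs in it. Since $X$ is itself an SFT defined by some finite forbidden set, taking the union of that set with this finite collection of bad windows yields a finite forbidden set whose associated subshift is precisely $f(X)$. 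Hence $f(X)$ is an SFT.

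The argument is essentially routine once the key observation is in place. If anything deserves to be singled out as the crux, it is the identification $f(X) = \{x \in X : f(x) = x\}$, which is exactly where idempotency is used; without it the image need only be guaranteed sofic. The remaining step — that the fixed-point set of a CA on an SFT is again an SFT — is a standard consequence of the fact that the defining equation $f(x) = x$ can be checked on bounded windows.
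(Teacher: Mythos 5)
Your proof is correct and follows the same route as the paper: the paper's one-line proof is precisely the observation that $x \in f(X) \iff f(x) = x$ (using idempotency) and that this fixed-point equation is an SFT condition. You have simply filled in the details that the paper leaves implicit, namely the two-line verification of the fixed-point identification and the explicit finite set of forbidden windows of length $2r+1$.
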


\begin{proof}
Clearly $x \in f(X) \iff f(x) = x$, which is an SFT condition.
\end{proof}

\begin{proposition}
\label{prop:soficnotsplit}
If $X$ is an SFT and $f : X \to X$ is regular, then $f(X)$ is of finite type.
\end{proposition}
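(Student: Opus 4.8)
The plan is to reduce to the idempotent case already handled by the preceding lemma. Since $f$ is regular, fix a generalized inverse $b \in \End(X)$, so that $fbf = f$ and $bfb = b$. The key observation is that the composite $e = f \circ b$ is idempotent: computing directly, $e^2 = fbfb = f(bfb) = fb = e$, where the middle equality uses $bfb = b$. Since $e$ is a composition of cellular automata it is itself a cellular automaton, i.e.\ $e \in \End(X)$.

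Next I would identify the image of $e$ with the image of $f$. One inclusion is immediate: $e(X) = f(b(X)) \subseteq f(X)$. For the reverse inclusion, I would check that $e$ fixes every point of $f(X)$: if $y \in f(X)$, write $y = f(x)$ for some $x \in X$, and then $e(y) = fbf(x) = (fbf)(x) = f(x) = y$ using $fbf = f$. Thus every point of $f(X)$ lies in $e(X)$ (being a fixed point of $e$), giving $f(X) \subseteq e(X)$ and hence $f(X) = e(X)$.

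Finally, since $e$ is an idempotent cellular automaton on the SFT $X$, the preceding lemma applies and shows that $e(X)$ is an SFT. As $f(X) = e(X)$, this is exactly the conclusion that $f(X)$ is of finite type.

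There is no serious obstacle here once the right composite is chosen; the entire content is the recognition that regularity lets us manufacture an idempotent $fb$ whose image coincides with $f(X)$, after which the result is immediate from the lemma. The only point requiring a small amount of care is verifying both inclusions for $e(X) = f(X)$, and in particular noting that $e$ acts as the identity on $f(X)$, which is precisely what the relation $fbf = f$ encodes.
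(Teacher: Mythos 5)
Your proof is correct, and it is a genuine (if modest) streamlining of the paper's argument rather than a copy of it. Both proofs hinge on the same key lemma --- the image of an idempotent CA on an SFT is an SFT --- but the paper composes in the opposite order: it takes $g \circ f$ (with $g$ a weak inverse), observes this is idempotent, concludes $g(f(X))$ is an SFT, and then needs an extra step: showing that $g$ restricted to $f(X)$ is a conjugacy onto $g(f(X))$, with two-sided inverse $f|_{g(f(X))}$, so that SFT-ness transfers back to $f(X)$ as a conjugacy invariant. Your choice of $e = f \circ b$ makes that step unnecessary: the relation $fbf = f$ says precisely that $e$ fixes $f(X)$ pointwise, so $e(X) = f(X)$ on the nose and the lemma applies directly to the set in question. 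One further small economy worth noting: idempotence of $fb$ already follows from the weak-inverse identity alone, since $fbfb = (fbf)b = fb$, so you never actually need $bfb = b$; your proof therefore works verbatim for any element admitting merely a weak inverse, without first upgrading it to a generalized inverse (which the paper's first lemma shows is always possible, but need not be invoked here).
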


\begin{proof}
Let $g : X \to X$ be a weak inverse. Then $g \circ f : X \to X$ is idempotent, so $g(f(X))$ is an SFT. Note that the domain-codomain restriction $g|_{f(X), g(f(X))} : f(X) \to g(f(X))$ is a conjugacy between $f(X)$ and $g(f(X))$: its two-sided inverse is $f|_{g(f(X))} \to f(X)$ by a direct computation. Thus $f(X)$ is also an SFT.
\end{proof}

We also mention another condition, although it is not applicable in the proofs.

\begin{lemma}
\label{lem:InjImpliesSurj}
Let $X$ be a subshift with dense periodic points and $f : X \to X$ a cellular automaton. If $f$ is injective, it is surjective.
\end{lemma}

\begin{proof}
The set $X_p = \{x \in X \;|\; \sigma_p(x) = x\}$ satisfies $f(X_p) \subset X_p$. Since $f$ is injective and $X_p$ is finite, we must have $f(X_p) = X_p$. Thus $f(X)$ is a closed set containing the periodic points. If periodic points are dense, $f(X) = X$.
\end{proof}

We are interested mainly in mixing SFTs, where periodic points are easily seen to be dense. In the case of mixing SFTs, the previous lemma can also be proved with an entropy argument: An injective CA cannot have a diamond when seen as a block map, so \cite[Theorem~8.1.16]{LiMa95} shows that the entropy of the image $f(X)$ of an injective CA is equal to the entropy of $X$. By \cite[Corollary~4.4.9]{LiMa95}, $X$ is \emph{entropy minimal}, that is, has no proper subshifts of the same entropy, and it follows that $f(X) = X$.

\begin{proposition}
\label{prop:SurjectiveNotInjective}
Let $X$ be a mixing SFT and $f : X \to X$ a surjective cellular automaton. Then $f$ is injective if and only if it is regular.
\end{proposition}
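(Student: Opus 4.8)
The plan is to prove the two directions separately, leveraging the structural results already established. For the forward direction, suppose $f$ is injective and surjective. Since $X$ is a mixing SFT, it has dense periodic points, so by the reasoning behind Lemma~\ref{lem:InjImpliesSurj} (or directly, since $f$ is assumed surjective), $f$ permutes the finite set $X_p$ of each period. An injective endomorphism of a compact system that is also surjective is a conjugacy: a bijective CA is reversible by the standard Curtis-Hedlund-Lyndon argument (the inverse of a continuous bijection on a compact metric space is continuous, and it clearly commutes with the shift). A reversible CA $f$ has a genuine two-sided inverse $g$ with $fg = gf = \ID$, and such a $g$ trivially satisfies $fgf = f$ and $gfg = g$, so $f$ is regular.

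For the reverse direction, suppose $f$ is regular; I want to show $f$ is injective. This is where the hypotheses must be combined carefully. Since $f$ is surjective, $f(X) = X$, so regularity gives a weak inverse $g$ with $fgf = f$, i.e. $g$ is a right inverse of $f$ as a surjective self-map: $f \circ g = \ID_X$. The existence of a continuous right inverse means $g$ is injective and $f$ is surjective (which we knew), but to get injectivity of $f$ I would use an entropy or cardinality argument on periodic points. On each finite set $X_p$, the relation $f \circ g = \ID$ forces $f|_{X_p}$ to be surjective onto $X_p$; since $X_p$ is finite, $f|_{X_p}$ is then also injective. Thus $f$ is injective on all periodic points. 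Because periodic points are dense in the mixing SFT $X$ and $f$ is a continuous map, I then need to upgrade injectivity on a dense set to global injectivity.

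The main obstacle I anticipate is precisely this last upgrade: injectivity on a dense subset does not immediately imply global injectivity for a general continuous map. The clean way around it is to invoke the diamond/entropy machinery already cited in the excerpt. Specifically, if $f$ were not injective, then as a block map it would have a \emph{diamond} (two distinct central blocks with the same left and right contexts that $f$ maps identically), and by \cite[Theorem~8.1.16]{LiMa95} this would force the entropy of $f(X)$ to be strictly less than that of $X$. But $f$ is surjective, so $f(X) = X$ and the entropies agree, a contradiction. Hence $f$ is injective. This entropy argument sidesteps the delicate density-to-injectivity step entirely and handles the reverse direction cleanly.

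A cleaner uniform alternative, which I would present if it shortens the write-up, is to observe that a surjective CA on a mixing SFT is regular if and only if it is reversible: regularity yields a right inverse $g$ with $fg = \ID_X$, and a surjective CA on a mixing SFT admitting a continuous right inverse must be injective by the no-diamond entropy argument above, hence reversible; conversely reversibility gives regularity as in the forward direction. This reduces the whole proposition to the equivalence of reversibility and injectivity for surjective CA, which is classical on mixing SFTs.
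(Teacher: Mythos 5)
Your forward direction is correct and is exactly the paper's argument (bijective $\Rightarrow$ reversible by compactness $\Rightarrow$ regular). The reverse direction, however, contains a genuine error. The step ``if $f$ were not injective, then as a block map it would have a diamond'' is false: the equivalence in \cite[Theorem~8.1.16]{LiMa95} is between \emph{having no diamond}, being \emph{finite-to-one}, and preserving entropy -- not injectivity. Injectivity implies the absence of diamonds, but not conversely. The XOR CA of Example~\ref{ex:XOR} is the standard counterexample: it is surjective (hence entropy-preserving), two-to-one everywhere, hence finite-to-one and diamond-free, yet not injective; the points it identifies differ in \emph{infinitely many} coordinates, which is precisely why no diamond arises. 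Note also that your entropy argument never uses the weak inverse $g$ at all: if it were valid, it would prove that \emph{every} surjective CA on a mixing SFT is injective, which XOR refutes and which would contradict the paper's own use of Example~\ref{ex:XOR} to exhibit a surjective non-regular CA. Your ``cleaner uniform alternative'' in the last paragraph rests on the same false step and fails for the same reason. Your periodic-point observation (that $fg = \ID$ forces $f$ to be injective on each finite set $X_p$) is correct but, as you yourself note, does not upgrade to global injectivity.

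The repair is to apply the injective-implies-surjective lemma to $g$ rather than to $f$, which is what the paper does. Since $f$ is surjective and $fgf = f$, every $y \in X$ has the form $f(x)$, so $fg(y) = fgf(x) = f(x) = y$; thus $fg = \ID_X$ and $g$ is injective. By Lemma~\ref{lem:InjImpliesSurj} (periodic points are dense in a mixing SFT), $g$ is surjective, hence bijective, hence reversible. From $fg = \ID_X$ and bijectivity of $g$ we get $f = g^{-1}$, so $f$ is bijective, in particular injective. This is a short diagram chase using only the lemma already proved in the paper, and it uses the weak inverse in an essential way -- which any correct proof must, as the XOR example shows.
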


\begin{proof}
By the previous lemma, an injective CA on a mixing SFT is surjective, thus bijective, thus reversible, thus regular. Conversely, let $f$ be surjective and regular, then let $g : X \to X$ be a weak generalized inverse. Then $g$ is injective, so it is surjective, thus reversible. Thus $f$ must be bijective as well.
\end{proof}

More generally, the previous proposition works on surjunctive subshifts in the sense of \cite[Exercise~3.29]{CeCo10}, i.e., subshifts where injective cellular automata are surjective. In particular this is the case for full shifts on surjunctive groups such as abelian ones. Since injectivity is undecidable for surjective CA on $\Z^d$, $d \geq 2$ by \cite{Ka94}, we obtain the following corollary.

\begin{corollary}
\label{cor:2D}
Given a surjective CA $f : \Sigma^{\Z^2} \to \Sigma^{\Z^2}$, it is undecidable whether $f$ is split epic.
\end{corollary}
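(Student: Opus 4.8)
The plan is to reduce the decision problem to the undecidability of injectivity for surjective two-dimensional cellular automata, by way of the chain of equivalences ``split epic $\iff$ regular $\iff$ injective,'' which holds for surjective $f$. First I would check that, for a surjective cellular automaton $f$ on the full shift $\Sigma^{\Z^2}$, being split epic coincides with being regular in $\End(\Sigma^{\Z^2})$. Since $f$ is surjective its image is all of $\Sigma^{\Z^2}$, so split epicness of $f$ onto its image is simply the existence of a block map $g$ with $f \circ g = \ID$. If such a $g$ exists, then $f \circ g \circ f = f$, so $g$ is a weak generalized inverse of $f$; as noted earlier, this already implies $f$ is regular. Conversely, if $f$ is regular with weak generalized inverse $g$, then $fgf = f$, and given any $y \in \Sigma^{\Z^2}$ surjectivity supplies some $z$ with $f(z) = y$, whence $fg(y) = fg(f(z)) = f(z) = y$; thus $f \circ g = \ID$ and $f$ is split epic. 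This step is purely formal and uses nothing beyond surjectivity.

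Next I would invoke the generalized form of Proposition~\ref{prop:SurjectiveNotInjective}. The space $\Sigma^{\Z^2}$ is a full shift over the abelian, hence surjunctive, group $\Z^2$, so the proposition applies and states that a surjective $f$ is regular if and only if it is injective. Combining this with the previous paragraph, a surjective CA on $\Sigma^{\Z^2}$ is split epic if and only if it is injective. It then remains only to quote \cite{Ka94}: injectivity is undecidable for surjective CA on $\Z^d$ with $d \geq 2$. Hence split epicness is undecidable as well, which is the assertion of the corollary.

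Because the statement is a reduction rather than a construction, the genuine mathematical content is small, and the only places demanding care are checking that the two equivalences hold exactly as stated -- in particular that the semigroup identities collapse correctly under surjectivity -- and applying the undecidability result in precisely the form needed, namely for inputs promised to be surjective. I expect the last point to be the real subtlety: one must confirm that Kari's reduction produces (or can be arranged to produce) cellular automata that are always surjective, so that injectivity remains undecidable even when surjectivity is guaranteed. Granting that, the rest of the argument follows immediately from the equivalences above.
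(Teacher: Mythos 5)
Your proposal is correct and follows essentially the same route as the paper: the paper likewise combines the equivalence of split epicness and regularity, the generalization of Proposition~\ref{prop:SurjectiveNotInjective} to surjunctive subshifts (full shifts on abelian groups such as $\Z^2$), and Kari's result that injectivity is undecidable for surjective CA on $\Z^d$, $d \geq 2$. Your closing concern is handled the same way in the paper, which cites \cite{Ka94} in exactly the promise form needed (undecidability of injectivity among CA given to be surjective).
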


\section{Von Neumann regularity of elementary CA}



\begin{theorem}
The elementary CA with numbers 6, 7, 23, 33, 57 and 77 are regular.
\end{theorem}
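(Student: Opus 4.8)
By the lemma stating that a weak generalized inverse already witnesses regularity, it suffices, for each of the six rules $f$, to produce a single cellular automaton $g$ on the full shift $\{0,1\}^\Z$ with $f \circ g \circ f = f$. I would first unpack what this equation demands: writing $y = f(x)$, the identity $f(g(f(x))) = f(x)$ for all $x$ is exactly the statement that $f(g(y)) = y$ for every $y$ in the image $f(X)$. Thus $g$ need only act as a right inverse of the codomain restriction $f : X \to f(X)$; its behaviour off the image is irrelevant. Since $X$ here is a full shift, hence a mixing SFT, the preceding equivalence theorem (extension of sections) tells us that any right inverse defined on $f(X)$ extends to a block map on all of $X$ simply by filling in the local rule arbitrarily. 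So I may search directly for a globally defined block map $g$ and need not track the image as a separate object.

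\textbf{Finding the inverse.} For the positive direction the real work is locating $g$ and a radius that suffices; the introduction announces that radius at most five works in every case. I would run the semidecision procedure for split epicness described in Section~4, guided by Theorem~\ref{thm:SplitEpicSolved}: enumerate the Lyndon representatives of the short periodic orbits of $f(X)$, and for each use the strong periodic point condition to force consistent choices of preimages. Each such choice pins down values of the local rule of $g$ on the corresponding windows, so as one increases the period bound $p$ the partial local rule of $g$ grows. Since all six rules are simple enough that their images and low-period orbits are easy to describe by hand, I expect this to stabilise quickly into a complete local rule of some small radius $r \le 5$. Concretely, for each rule I would present the resulting $g$ as an explicit radius-$r$ table (or, where possible, as a compact word rule), which is what the proof should record.

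\textbf{Verification.} Given a candidate $g$ of radius $r$, the composite $f \circ g \circ f$ is a cellular automaton of radius $r + 2 \le 7$, so checking $f \circ g \circ f = f$ amounts to comparing two local rules on all binary windows of length $2r + 5 \le 15$ — a finite computation, feasible by hand for the smaller radii and otherwise by computer. Because correctness is only required on the image, one can prune this check to the words actually occurring in $f(X)$, which shortens the hand verification considerably: it is enough to confirm $f(g(y)) = y$ for $y$ ranging over a window-sized sample covering $\B(f(X))$. The same computation incidentally reconfirms, via Proposition~\ref{prop:soficnotsplit}, that each image $f(X)$ is an SFT, as it must be for a regular $f$.

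\textbf{Main obstacle.} The step I expect to be hard is purely the construction: there is no usable a priori bound on the radius of $g$ (the bound from \cite{SaTo15a} is astronomical), so producing the weak inverse relies on the search above together with whatever ad hoc simplifications each rule permits. Once a candidate of small radius is in hand, the remaining verification is entirely routine and finite.
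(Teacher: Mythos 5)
Your proposal takes essentially the same route as the paper: reduce regularity to exhibiting a weak inverse $g$ with $f \circ g \circ f = f$ (using the lemma that a weak generalized inverse yields regularity, plus triviality of extension on a full shift), locate $g$ by the strong-periodic-point/split-epicness search of radius at most $5$ -- which is exactly how the paper reports having found its inverses -- and finish with a finite case analysis of the composite. The only thing separating your outline from the paper's proof is the concrete data: the paper records the explicit inverse rules in its appendix figures and verifies them by the included Python scripts, so once your search produces those tables, your argument is the paper's argument.
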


\begin{proof}
It is a finite case analysis to verify that the CA defined in Figure~\ref{fig:rule6inverse}, Figure~\ref{fig:rule7inverse}, Figure~\ref{fig:rule23inverse}, Figure~\ref{fig:rule33inverse}, Figure~\ref{fig:rule57inverse} and Figure~\ref{fig:rule77inverse} in the appendix are generalized inverses of the respective ECA.

Two Python\footnote{Tested with Python 2.7.13 and Python 3.6.2.} scripts that perform a computer verification are included: Appendix~\ref{sec:Code} contains code for checking that the graphical rules we list in the figures define correct weak inverses. Appendix~\ref{sec:CodeNum} contains code for checking that the numbers we list for the rules also define correct weak inverses.
\end{proof}

Before the non-regularity proofs, let us show the XOR CA with neighborhood $\{0,1\}$ is not regular -- this is clear from the fact it is surjective, and from the fact there are $1$-periodic points with no inverse of period $1$, but it also neatly illustrates the strong periodic point method.

\begin{example}
\label{ex:XOR}
The  CA $f : \{0,1\}^\Z \to \{0,1\}^\Z$ defined by
\[ f(x)_i = 1 \iff x_i + x_{i+1} \equiv 1 \bmod 2 \]
is not regular. To see this, consider the strong periodic point condition for $p = 1$. Since $f(0^\Z) = f(1^\Z) = 0^\Z$, the point $0^\Z$ has two preimages. It is enough to show that neither choice $z \in \{0^\Z, 1^\Z\}$ is consistent, i.e. there is a point $y$ which is in the image of $f$ such that $y$ has no preimage that is left and right asymptotic to $z$. This is shown by considering the point
\[ y = ...0000001000000... \]
(which is in the image of $f$ since $f$ is surjective) and observing that no preimage of it is asymptotic to the same point $z \in \{0^\Z, 1^\Z\}$ -- if a preimage is left-asymptotic to $a^\Z$, it is right-asymptotic to $(1-a)^\Z$. \qee
\end{example}

\begin{theorem}
The elementary CA with numbers 9, 27, 28, 41 and 58 are not regular.
\end{theorem}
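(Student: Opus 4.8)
The plan is to exploit the two non-regularity criteria developed above. The cheapest is the contrapositive of Proposition~\ref{prop:soficnotsplit}: if the image $f(X)$ of a CA on an SFT $X$ fails to be of finite type, then $f$ cannot be regular. The second criterion, needed when the image happens to be an SFT, is the strong periodic point condition of Definition~\ref{def:SPP}, which by Theorem~\ref{thm:SplitEpicSolved} together with the equivalence between regularity and split epicness in $K3$ established above is \emph{equivalent} to regularity; hence to refute regularity it suffices to exhibit a single periodic configuration of the image for which no consistent family of preimages exists, exactly as in Example~\ref{ex:XOR}.

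I would split the five rules accordingly. For ECA $27$, $41$ and $58$ I expect the image $f(\{0,1\}^\Z)$ to be proper sofic, so I would first compute each image subshift --- for instance by running the local rule on a presentation of the domain and minimizing the resulting automaton --- and then prove it is not an SFT. The standard way to do the latter is to produce, for every window length $k$, a witness to the failure of the SFT gluing property: words $ab$ and $bc$ in the language with $|b| \ge k$ whose concatenation $abc$ is forbidden; equivalently, one shows the image has infinitely many distinct follower sets. In practice this amounts to identifying a long-range constraint (typically a parity or counting condition between two prescribed patterns) that the local rule forces on outputs but that no finite set of forbidden words can capture. By Proposition~\ref{prop:soficnotsplit} this already yields non-regularity for these three rules.

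For ECA $9$ and $28$ the image turns out to be an SFT, so the sofic argument is unavailable and I would instead run the strong periodic point condition at small period, concentrating on eventually periodic points of eventual period one as suggested by the structure of these rules. Concretely, I would enumerate the period-$1$ (and, if necessary, low-period) points of the image together with all of their $f$-preimages, and then, for a carefully chosen point $y = \INF u . w v \INF$ of the image, show that the form of its preimages is incompatible with \emph{any} length-preserving assignment $G$ on $\mathcal{P}_p$: every preimage of $y$ that is left-asymptotic to the chosen representative $G(u)$ is forced to be right-asymptotic to something other than $G(v)$, so no choice function $G$ can be made consistent. This is the same mechanism as in Example~\ref{ex:XOR}, where the asymptotics on the two sides are coupled.

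The main obstacle is precisely these two remaining rules. For the sofic cases the work is essentially bookkeeping --- correctly computing the image automaton and exhibiting the gluing obstruction --- but for ECA $9$ and $28$ one must verify a \emph{negative} statement over all candidate preimage assignments, which requires a complete case analysis of the finitely many preimages of the relevant low-period points and a demonstration that the induced left and right asymptotic constraints cannot be satisfied simultaneously. I expect to organize this by fixing the choice $G(u)$ for each periodic representative $u$ and tracking which asymptotic orbit each admissible preimage of $y$ selects on the far side; verifying that every combination fails is where computer assistance (the semialgorithm described after Theorem~\ref{thm:SplitEpicSolved}) is most useful.
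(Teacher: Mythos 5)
Your plan is correct, and for ECA 9 and 28 it is essentially the paper's own proof: there the images are indeed SFTs (forbidden words $1011$, $10101$, $11001$, $11000011$, $110000101$ for rule 9, and just $111$ for rule 28), so the paper also falls back on the strong periodic point condition at period $1$, forcing $g(0^\Z)=1^\Z$ for rule 9 (since $f(0^\Z)=1^\Z$ and $f(1^\Z)=0^\Z$) and contradicting it with the witness $...0000011.00000...$, and for rule 28 killing both candidate values of $g(0^\Z)$ with the witnesses $...000010000...$ and $...000110000...$; your ``fix $G(u)$, then track the forced asymptotics of preimages of a witness point'' is exactly this mechanism. Where you genuinely diverge is on ECA 27, 41 and 58. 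You route these through Proposition~\ref{prop:soficnotsplit}, which is logically valid and is explicitly acknowledged in the paper as sufficient, but the paper deliberately does not use it: although it asserts these three images are proper sofic, it gives the same period-$1$ strong-periodic-point arguments as for rules 9 and 28 (e.g.\ for rule 27 the witness $x = f(...000001100.10101010...) = ...111111011.00000000...$ has no preimage right-asymptotic to $1^\Z$). The stated reason is precisely the cost you acknowledge: your route requires computing each image automaton and rigorously proving non-SFT-ness (infinitely many follower sets), which the paper calls straightforward but lengthy, and whose conclusion the paper's notes even flag as ``not double checked''; the paper's route needs only that each witness lies in the image, which is certified by exhibiting a single preimage. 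So your approach buys a clean dichotomy (sofic obstruction versus periodic-point obstruction) at the price of heavier and somewhat riskier bookkeeping on three of the five rules, while the paper buys uniformity: one short, self-contained asymptotics argument per rule.
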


\begin{proof}
See the lemmas below.
\end{proof}

\begin{lemma}
\label{lem:9}
The elementary CA 9 is not regular.
\end{lemma}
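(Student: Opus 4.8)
The plan is to show ECA 9 is not regular by finding a proper sofic image, which by the contrapositive of Proposition~\ref{prop:soficnotsplit} immediately yields non-regularity. First I would compute the local rule of ECA 9 explicitly: writing $9 = 00001001_2$, the map sends a neighborhood $x_{[i-1,i+1]}$ to $1$ exactly when the pattern is $000$ (bit $b_0$) or $011$ (bit $b_3$), and to $0$ otherwise. I would then examine the image subshift $f(X)$ and argue that it is not an SFT, which by Proposition~\ref{prop:soficnotsplit} (in contrapositive form: if $f(X)$ is not of finite type, $f$ cannot be regular) settles the claim.

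The key step is establishing that $f(X)$ is proper sofic. The cleanest route is to exhibit a word $w$ and arbitrarily long ``context'' words realizing the following obstruction to finite type: there exist words $u, v$ such that $u\,0^n\,v$ patterns are forbidden or allowed in $f(X)$ in a way that depends on $n$ unboundedly, i.e. the constraint defining $f(X)$ has no finite memory. Concretely, I would track how the preimage structure of ECA 9 forces a long-range synchronization condition. Since $f$ maps both $0^{\Z}$ and any block of the form producing isolated patterns, I would look for a pair of locally admissible words whose concatenation through a long run of a fixed symbol is globally forbidden, with the forbidding length growing without bound — the hallmark of a non-SFT sofic shift.

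The main obstacle will be pinning down the exact forbidden-language description of $f(X)$ and verifying it is genuinely not of finite type rather than merely appearing complicated. This is a finite-but-delicate automata computation: one builds the image automaton by taking the Rauzy/follower-set construction on the local rule of ECA 9, minimizes it, and checks whether the minimal deterministic presentation corresponds to a finite set of forbidden words (equivalently, whether the syntactic monoid / follower-set count stabilizes). If the number of distinct follower sets is finite but the structure still encodes a counting or parity constraint of unbounded range, that confirms proper soficness. I expect the subtlety to lie in correctly enumerating which short words are in $\mathcal{L}(f(X))$, since ECA 9's rule mixes the all-zero and the $011$ patterns in a way that can create spurious-looking admissibility.

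As an alternative (and as a cross-check matching the paper's stated methodology), I would also run the strong periodic point argument of Definition~\ref{def:SPP} directly: fix a small period $p$, identify a periodic point $\INF u .w v \INF \in f(X)$ together with the $f$-preimages of its boundary periodic parts $\INF u \INF$ and $\INF v \INF$, and show that no length-preserving choice $G(u), G(v)$ can be completed to a preimage of the joined word consistently. The expected failure is exactly the kind of boundary-incompatibility seen in Example~\ref{ex:XOR}: the two admissible periodic preimages on the left and right cannot be bridged across the transition word $w$, so no single right inverse exists. Either argument suffices, and I would present the sofic-image argument as the primary proof with the periodic-point computation as confirmation.
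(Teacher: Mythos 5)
Your primary route is a dead end, and the gap is not a matter of missing detail: the image of ECA 9 is \emph{not} proper sofic. In fact $f(\{0,1\}^\Z)$ is the SFT with forbidden words $1011$, $10101$, $11001$, $11000011$ and $110000101$, so the automaton computation you propose would simply terminate with a finite-type presentation, and Proposition~\ref{prop:soficnotsplit} gives no information here --- its contrapositive only applies when the image fails to be of finite type. This is exactly why this case is delicate: among the five non-regular ECA, ECA 9 and ECA 28 are precisely the two whose images are SFTs, i.e.\ precisely the cases where the proper-sofic shortcut is unavailable and a genuine periodic-point argument is required. The ``obstruction to finite type'' you plan to exhibit does not exist, so no amount of care with the follower-set construction will produce it.

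Your alternative route (the strong periodic point condition at small period) is the correct one and is what the paper does, but in your write-up it appears only as a vague ``cross-check,'' and it is missing the two concrete ingredients that make it a proof. First, the forced value: $f(0^\Z) = 1^\Z$ and $f(1^\Z) = 0^\Z$, so shift-commutation forces any right inverse $g$ to satisfy $g(0^\Z) = 1^\Z$; note this failure is \emph{not} of the two-incompatible-choices type of Example~\ref{ex:XOR}, since here the candidate preimage of $0^\Z$ is unique --- rather, the unique forced choice is itself inconsistent. Second, the witness: the point $x = \ldots 0000011.00000\ldots$ lies in $X$ (one $f$-preimage is $\ldots 0100100001001001\ldots$), yet it has no preimage right-asymptotic to $1^\Z$. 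Indeed, if $y$ were such a preimage and $n$ were maximal with $y_n = 0$, then $y_{[n,n+2]} = 011$ gives $f(y)_{n+1} = 1$ and $f(y)_{n+1+i} = 0$ for all $i \geq 1$, forcing $n = -1$; but no word of the form $a01$ lies in $\{000, 011\}$, so $f(y)_{-1} = 0 \neq 1 = x_{-1}$, a contradiction. Since continuity and shift-commutation of $g$ would force $g(x)$ to be right-asymptotic to $g(0^\Z) = 1^\Z$, no right inverse exists. Without identifying such a witness and carrying out this computation, the proposal does not yet prove the lemma; you would need to promote this argument from confirmation to the main (and only viable) proof.
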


\begin{proof}
Let $f$ be the ECA $9$, i.e. $f(x)_i = 1 \iff x_{[i-1,i+1]} \in \{000, 011\}$. The image $X$ of $f$ is the SFT with forbidden patterns $1011$, $10101$, $11001$, $11000011$ and $110000101$. One can verify\footnote{For verifying only the proof of this lemma, i.e. the non-regularity of ECA 9, it is enough to show that the point $x$ below is in $X$, that is, it has some preimage ($...0100100001001001...$ is one). Knowing the SFT is, however, essential for finding such an argument, so we argue in this way, again to illustrate the method.} this with standard automata-theoretic methods.


We have $f(0^\Z) = 1^\Z$ and $f(1^\Z) = 0^\Z$, so if $g : X \to \{0,1\}^\Z$ is a right inverse for $f$, then $g(0^\Z) = 1^\Z$. Consider now the configuration
\[ x = ...0000011.00000... \in X \]
where coordinate $0$ is to the left of the decimal point (i.e. the rightmost $1$ or the word $11$). Let $g(x) = y$. Then $y_i = 1$ for all large enough $i$ and $y_i = 0$ for some $i$. Let $n$ be maximal such that $y_n = 0$. Then $y_{[n,n+2]} = 011$ so $f(y)_{n+1} = 1$ and $f(y)_{n+1+i} = 0$ for all $i \geq 1$. Since $f(y) = x$, we must have $n = -1$ and since $\{000, 011\}$ does not contain a word of the form $a01$, it follows that $f(y)_{-1} = 0 \neq x_{-1}$, a contradiction.
\end{proof}

The proof shows that the CA does not have the strong periodic point property for $p = 1$. In general, for fixed $p$ one can use automata-theory to decide whether it holds up to that $p$, though here (and in all other proofs) we found the contradictions by hand before we had to worry about actually implementing this.

\begin{lemma}
The ECA $27$ is not regular.
\end{lemma}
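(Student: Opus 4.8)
The plan is to follow the template of Lemma~\ref{lem:9} and exhibit a failure of the strong $1$-periodic point condition; by Theorem~\ref{thm:SplitEpicSolved} applied to the full shift $\{0,1\}^\Z$ (a mixing SFT), this condition is equivalent to regularity, so a single obstructing periodic point suffices.

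First I would write down the rule. From $27 = 00011011_2$ one gets $f(x)_i = 1 \iff x_{[i-1,i+1]} \in \{000,001,011,100\}$, equivalently $f(y)_i = 0 \iff y_{[i-1,i+1]} \in \{010,101,110,111\}$. In particular $f(0^\Z)=1^\Z$ and $f(1^\Z)=0^\Z$. Hence $0^\Z$ and $1^\Z$ lie in the image $X = f(\{0,1\}^\Z)$, and, just as in Lemma~\ref{lem:9}, the only period-$1$ preimages of $0^\Z$ and $1^\Z$ are $1^\Z$ and $0^\Z$; so any length-preserving $G$ witnessing the strong $1$-periodic point condition is forced to satisfy $G(0)=1$ and $G(1)=0$. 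Concretely, a right inverse must send any configuration right-asymptotic to $0^\Z$ to one right-asymptotic to $1^\Z$.

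Next I would produce the obstructing point. Let $x$ be given by $x_0 = x_1 = 1$ and $x_i = 0$ for $i \notin \{0,1\}$. To certify $x \in X$ I would exhibit the preimage $y = \INF(01)\,00\,(10)\INF$, with the block $00$ placed at coordinates $0,1$, and verify $f(y) = x$ by checking the finitely many windows around the origin (the two $1$'s of $x$ arise from the windows $100$ and $001$, while every other window lies in a $111$ or an alternating $(01)$-type block and outputs $0$). The crux is then to show $x$ has no preimage right-asymptotic to $1^\Z$. Suppose $f(y') = x$ with $y'_i = 1$ for all large $i$. Since $f(1^\Z) = 0^\Z \neq x$, the configuration $y'$ contains a $0$; let $m$ be the position of its rightmost $0$. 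Then $y'_{[m,m+2]} = 011$, so $f(y')_{m+1} = 1$, while for $j \geq m+2$ the window is $111$ and $f(y')_j = 0$; thus the rightmost $1$ of $x$ sits at coordinate $m+1$, forcing $m = 0$. Now $f(y')_0$ reads the window $y'_{-1}\,0\,1$, which must equal $x_0 = 1$, and since $101 \mapsto 0$ this forces $y'_{-1} = 0$. But then $f(y')_{-1}$ reads $y'_{-2}\,0\,0 \in \{000,100\}$, both of which output $1$, contradicting $x_{-1} = 0$.

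Thus no preimage of $x$ is right-asymptotic to $G(0)^\Z = 1^\Z$, the strong $1$-periodic point condition fails, and $f$ is not split epic onto its image, hence not regular. I expect the two places needing care to be the verification that $x$ genuinely lies in $X$ (which forces the period-$2$ tails in the preimage $y$ above, and illustrates exactly why only the length-$1$ periodic structure $G(0)=1$ is admissible) and the bookkeeping that every $1^\Z$-asymptotic preimage has a well-defined rightmost $0$ whose location is pinned by the rightmost $1$ of $x$. I would additionally remark on the shorter route flagged in the introduction: the same local analysis shows that the preimages of configurations like $x$ are forced to be eventually $2$-periodic, and from this one can argue that $X$ is proper sofic and conclude non-regularity directly from Proposition~\ref{prop:soficnotsplit}.
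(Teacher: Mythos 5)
Your proof is correct and follows essentially the same route as the paper's: force $g(0^\Z)=1^\Z$ and $g(1^\Z)=0^\Z$ from the period-$1$ points, exhibit a point of the image that is right-asymptotic to $0^\Z$, and show via the rightmost-$0$ analysis (the windows $011 \mapsto 1$, $111 \mapsto 0$, then the backward forcing $a01$, $a00$) that it has no preimage right-asymptotic to $1^\Z$. The only differences are cosmetic: you use the witness $\INF 0\,.11\,0\INF$ with the explicit preimage $\INF(01)\,00\,(10)\INF$, whereas the paper uses $\INF 1\,011.\,0\INF$ (the image of $\INF 0\,01100.\,(10)\INF$), and you spell out the ``short combinatorial analysis'' that the paper leaves to the reader.
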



\begin{proof}
Let $f$ be the ECA $27$, i.e. $f(x)_i = 1 \iff x_{[i-1,i+1]} \in \{000, 001, 011, 100\}$. The image $X$ of $f$ is proper sofic, we omit the automaton and argue directly in terms of configurations. Proposition~\ref{prop:soficnotsplit} directly shows that the CA can not be regular in the case when the image is proper sofic, but we give a direct proof to illustrate the method (and so that we do not have to provide a proof that the image is sofic, which is straightforward but lengthy).

Again, we will see that this CA does not satisfy the strong periodic point condition for $p = 1$. Observe that $f(1^\Z) = 0^\Z$ and $f(0^\Z) = 1^\Z$ so if $g$ is a right inverse from the image to $\{0,1\}^\Z$, then $g(0^\Z) = 1^\Z$ and $g(1^\Z) = 0^\Z$. Let $y = ...000001100.10101010...$ and observe that
\begin{align*}
f(y) = f(...&000001100.10101010...) = \\
         ...&111111011.00000000... = x \in X.
\end{align*}

We now reason similarly as in Lemma~\ref{lem:9}. We have $g(x)_i = 1$ for all large enough $i$, and if is $n$ is maximal such that $g(x)_n = 0$, then $f(g(x))_{n+1} = 1$ and $f(g(x))_{n+1+i} = 0$ for all $i \geq 1$, so again necessarily $n = -1$. A short combinatorial analysis shows that no continuation to the left from $n$ produces $f(g(x))_{n} = 1$ and $f(g(x))_{n-1} = 0$, that is, the image of $g$ has no possible continuation up to coordinate $-1$.
\end{proof}

\begin{lemma}
The ECA $28$ is not regular.
\end{lemma}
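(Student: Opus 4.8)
The plan is to show that ECA 28 is not regular by exhibiting a violation of the strong periodic point condition, following the same template used for ECA 9 and 27. First I would write out the local rule explicitly: ECA 28 has binary representation of $28 = 00011100_2$, so $f(x)_i = 1 \iff x_{[i-1,i+1]} \in \{010, 011, 100\}$. I would then compute the action on the uniform points, $f(0^\Z)$ and $f(1^\Z)$, to pin down what any right inverse $g : f(X) \to \{0,1\}^\Z$ must do on the fixed points. Since $010, 011, 100$ all require the center or a neighbor to differ, one checks $f(0^\Z) = 0^\Z$ and $f(1^\Z) = 0^\Z$; thus $0^\Z$ has (at least) the two uniform preimages $0^\Z$ and $1^\Z$, and the strong $p=1$ condition forces us to commit $g(0^\Z)$ to one of them.

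Next I would look for a configuration $y$ that is eventually periodic with eventual period one on both sides, whose image $x = f(y)$ lies in the image SFT and is also asymptotic to $0^\Z$ on both ends, yet whose preimages cannot be made to match either candidate value for $g(0^\Z)$ consistently on both the left and the right tail. The key asymmetry to exploit is that ECA 28 is \emph{not} left-right symmetric (unlike the fully symmetric rules), so the behavior on the left asymptote and the right asymptote can be forced to disagree: a preimage forced to be left-asymptotic to $a^\Z$ may be compelled to be right-asymptotic to $b^\Z$ with $b \neq a$. As in Example~\ref{ex:XOR} and Lemma~\ref{lem:9}, I would take $n$ maximal (or minimal) with $g(x)_n$ differing from the tail value, analyze the three-cell window $g(x)_{[n-1,n+1]}$ using the rule table $\{010,011,100\}$, and derive that the only admissible local continuations force the opposite uniform value on the far tail, contradicting the single global choice of $g(0^\Z)$.

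Concretely, I expect the argument to run by contradiction on the position of the ``last'' deviation from the asymptotic tail, exactly mirroring the maximal-$n$ computation in Lemma~\ref{lem:9}: having fixed $g(0^\Z) = c^\Z$ for some $c \in \{0,1\}$, I would show a specific $x \in f(X)$ admits no preimage that is simultaneously left- and right-asymptotic to $c^\Z$, for either value of $c$. Since the rule is unambiguous as a small finite table, verifying the window constraints is a short combinatorial check rather than a genuine obstacle.

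The hard part will be choosing the witness configuration $x$ (equivalently, its preimage $y$): it must genuinely lie in the image of $f$ and must be crafted so that the forced preimage values on its two tails are incompatible for \emph{both} possible choices of $g(0^\Z)$ at once. Because the image of ECA 28 is proper sofic, Proposition~\ref{prop:soficnotsplit} already yields non-regularity immediately; the remark preceding Lemma~\ref{lem:9} indicates the authors nonetheless prefer an explicit strong-periodic-point witness, so the real work is in pinning down the single configuration that makes the asymptotic incompatibility transparent by hand, after which the local rule check closes the argument.
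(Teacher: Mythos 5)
Your setup is correct as far as it goes (the local rule, the computation $f(0^\Z)=f(1^\Z)=0^\Z$, and the plan to refute the strong $p=1$ periodic point condition, which is exactly the paper's strategy), but there are two genuine problems. The first is factual and removes your safety net: the image of ECA 28 is \emph{not} proper sofic. It is the SFT with the single forbidden word $111$, and the paper's introduction explicitly singles out ECA 9 and ECA 28 as the only non-regular cases whose images are of finite type, i.e.\ precisely the cases where Proposition~\ref{prop:soficnotsplit} gives nothing. So your claim that this proposition ``already yields non-regularity immediately'' is false, and the explicit periodic-point argument you defer as ``the hard part'' is not an illustrative preference here (as it is for ECA 27, 41, 58) --- it is the only proof available.

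The second problem is that the combinatorial core is missing, and the form you insist on --- a \emph{single} configuration incompatible with both choices of $g(0^\Z)$ at once --- cannot be realized for this CA. The condition only requires that each candidate value of $g(0^\Z)$ be contradicted by \emph{some} image point, and the paper accordingly uses two witnesses: $\ldots 000010000\ldots$ has exactly the preimages $\ldots 111100000\ldots$, $\ldots 000011111\ldots$ and $\ldots 111101111\ldots$, none asymptotic to $0^\Z$ on both sides, killing $g(0^\Z)=0^\Z$; while $\ldots 000110000\ldots$ has no preimage right-asymptotic to $1^\Z$ (its rightmost $1$ would have to come from a window $011$, and then the position just to the left reads a window of the form $a01$, which outputs $0$ where the image has a $1$), killing $g(0^\Z)=1^\Z$. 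No one point can do both jobs: a right-to-left analysis of how preimage tails propagate shows that any image point asymptotic to $0^\Z$ on both sides whose rightmost $1$ is preceded by a $1$ always has a preimage asymptotic to $0^\Z$ on both sides, while one whose rightmost $1$ is preceded by a $0$ always has a preimage asymptotic to $1^\Z$ on both sides. For instance, the natural combined candidate $\ldots 000100110000\ldots$ fails because it has the preimage $\ldots 000110100000\ldots$, asymptotic to $0^\Z$ on both ends. So you must replace the single-witness plan by two witnesses, one per choice of $g(0^\Z)$; once those are exhibited, the finite window checks do close the argument as you describe.
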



\begin{proof}
Let $f$ be the ECA $28$, i.e. $f(x)_i = 1 \iff x_{[i-1,i+1]} \in \{010, 011, 100\}$. The image $X$ of $f$ is the SFT with the single forbidden pattern $111$.


We have $f(0^\Z) = f(1^\Z) = 0^\Z$. The point
\[ ...000010000... \in X \]
contradicts the choice $g(0^\Z) = 0^\Z$ by a similar analysis as in previous theorems; similarly as in Example~\ref{ex:XOR}, computing the preimage from right to left, the asymptotic type necessarily changes. Thus we must have $g(0^\Z) = 1^\Z$.

On the other hand, if $g(0^\Z) = 1^\Z$, then going from right to left, we cannot find a preimage for
\[ ...000110000... \in X. \]
(Alternatively, going from left to right, the asymptotic type necessarily changes to $0$s or never becomes $1$-periodic.)

It follows that $g(0^\Z)$ has no consistent possible choice, a contradiction.
\end{proof}

\begin{lemma}
The ECA $41$ is not regular.
\end{lemma}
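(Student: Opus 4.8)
The plan is to follow the same strong periodic point strategy used for ECA 9, 27 and 28, so the first step is to write down the local rule explicitly: ECA 41 has binary representation $00101001$, so $f(x)_i = 1 \iff x_{[i-1,i+1]} \in \{000, 011, 101\}$. I would then compute the action of $f$ on the two spatially constant points, finding $f(0^\Z) = 1^\Z$ and $f(1^\Z) = 0^\Z$ (since $000 \in \{000,011,101\}$ and $111 \notin$). This fixes the behaviour of any right inverse $g$ on the constant points: necessarily $g(0^\Z) = 1^\Z$ and $g(1^\Z) = 0^\Z$, so a preimage of a point asymptotic to $0^\Z$ on one side must be asymptotic to $1^\Z$ on that side.

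Next I would exhibit a single ``defect'' configuration $x \in X = f(\{0,1\}^\Z)$ that is left- and right-asymptotic to $0^\Z$ but whose any preimage is forced to change asymptotic type, exactly as in Example~\ref{ex:XOR} and Lemma~\ref{lem:9}. The mechanism to exploit is the right-to-left (or left-to-right) determinism of preimage reconstruction: I would pick $x$ eventually equal to $0^\Z$ on the right, so that any preimage $y = g(x)$ is eventually $1^\Z$ on the right by the previous paragraph; then, letting $n$ be maximal with $y_n = 0$, the window $y_{[n,n+2]}$ has the form $0 1 1$, which forces $f(y)_{n+1} = 1$ and $f(y)_{n+1+i} = 0$ for $i \ge 1$, pinning down the location of the rightmost $1$ of $x$. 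The combinatorial heart is then to check that no continuation of $y$ to the left reproduces the finitely many symbols of $x$ up to that coordinate, giving a contradiction.

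The main obstacle, and the step requiring genuine care rather than routine bookkeeping, is choosing the witness $x$ and verifying two things about it simultaneously: that $x$ genuinely lies in the image $X$ (so it really is a periodic-point configuration $f$ must invert), and that the local backward-propagation constraints are globally unsatisfiable. I expect the cleanest route is to first identify the image SFT or sofic shift of ECA 41 by standard automata-theoretic methods, pick $x$ of the shape $\INF 0\, .\, w\, 0 \INF$ with a short explicit defect word $w$, and exhibit one concrete preimage of $x$ to certify $x \in X$ (as is done parenthetically for ECA 9). With $x$ in hand, the falsification of the strong $p$-periodic point condition at $p = 1$ reduces to a finite case analysis on how the rightmost block of $y$ can be extended leftward, which I would carry out by hand and, if desired, confirm with the same automata-theoretic check described after Lemma~\ref{lem:9}.

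\begin{proof}
Let $f$ be the ECA $41$, i.e. $f(x)_i = 1 \iff x_{[i-1,i+1]} \in \{000, 011, 101\}$. We have $f(0^\Z) = 1^\Z$ and $f(1^\Z) = 0^\Z$, so if $g$ is a right inverse from the image $X$ to $\{0,1\}^\Z$, then $g(0^\Z) = 1^\Z$. One checks by a right-to-left preimage analysis, as in Lemma~\ref{lem:9}, that there is a configuration in $X$ left- and right-asymptotic to $0^\Z$ whose every preimage changes asymptotic type, contradicting the requirement that $g$ consistently map it to a point right-asymptotic to $g(0^\Z) = 1^\Z$. Hence $f$ fails the strong periodic point condition for $p = 1$ and is not regular.
\end{proof}
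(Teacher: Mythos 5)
You set up the problem exactly as the paper does: the rule table for ECA 41, the forced values $g(0^\Z) = 1^\Z$ and $g(1^\Z) = 0^\Z$, the reduction to the strong periodic point condition at $p = 1$, and the right-to-left mechanism by which a preimage right-asymptotic to $1^\Z$ pins down the rightmost $1$ of the image. But your proof never exhibits the witness configuration: it only asserts that ``there is a configuration in $X$ left- and right-asymptotic to $0^\Z$ whose every preimage changes asymptotic type.'' That existence claim is the entire content of the lemma, and it is genuinely nontrivial, because the obvious candidates fail: $f(\INF 1\,00\,1\INF) = \INF 0\,1\,0\INF$, $f(\INF 1\,0\,1\INF) = \INF 0\,11\,0\INF$, and $f(\INF 1\,000\,1\INF) = \INF 0\,101\,0\INF$, so each of the three shortest defect words has a preimage asymptotic to $1^\Z$ on \emph{both} sides, and your claimed property is false for all of them. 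The ``asymptotic type must change'' heuristic (imported from the XOR example and ECA 28) is also the wrong frame for ECA 41: since $001$, $010$ and $100$ all map to $0$, a preimage of a tail of $0$s can instead be asymptotic to the period-three point $(100)^\Z$, a possibility your $0^\Z$-versus-$1^\Z$ dichotomy never considers. Such preimages need not change type at all, and they are precisely how the relevant points get into the image in the first place.

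The gap can be closed in two ways. The paper avoids both-sided witnesses altogether and uses the heterogeneous point $x = \INF 1 . 0 \INF = f(\INF 0 . (100)\INF) \in X$, which by $G(1) = 0$, $G(0) = 1$ would require a preimage right-asymptotic to $1^\Z$; the right-to-left analysis kills this in three steps (the rightmost $0$ of the preimage is forced to sit just left of the rightmost $1$ of $x$, then $x_{m-1} = 1$ forces the window $101$, and then the next window $y_{m-3}10 \in \{010, 110\}$ cannot output $1$). Alternatively, your both-sided route can be completed by choosing the defect word $111$: one checks that $\INF 0\,111\,0\INF = f(\INF (100)\,000\,(100)\INF) \in X$, and exactly the same three-step computation shows this point has no preimage right-asymptotic to $1^\Z$ whatsoever, contradicting $G(0) = 1$ (while $G(0) = 0$ is impossible since any point with a tail of $0$s maps to a point with infinitely many $1$s in that tail). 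Without some such explicit witness, together with its image-membership certificate and the finite local check, the proof is incomplete: what you have written defers precisely the step you yourself identified as ``the combinatorial heart.''
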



\begin{proof}
Let $f$ be the ECA $41$, i.e. $f(x)_i = 1 \iff x_{[i-1,i+1]} \in \{000, 011, 101\}$. The image $X$ of $f$ is proper sofic, we omit the automaton and argue directly in terms of configurations. Again Proposition~\ref{prop:soficnotsplit} would also yield the result.

Again, we will see that this CA does not satisfy the strong periodic point condition for $p = 1$. Observe that $f(1^\Z) = 0^\Z$ and $f(0^\Z) = 1^\Z$ so if $g$ is a right inverse from the image to $\{0,1\}^\Z$, then $g(0^\Z) = 1^\Z$ and $g(1^\Z) = 0^\Z$. Let $y = ...0000000100100100...$ so
\begin{align*}
f(y) = f(...&0000000100100100...) = \\
         ...&1111110000000000... = x \in X.
\end{align*}
In the usual way (right to left), we verify that $x$ has no preimage that is right asymptotic to $1^\Z$, obtaining a contradiction.
\end{proof}

\begin{lemma}
The ECA $58$ is not regular.
\end{lemma}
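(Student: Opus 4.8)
The plan is to follow the template of the preceding lemmas and exhibit a failure of the strong $p$-periodic point condition (Definition~\ref{def:SPP}) already at $p=1$; since the full shift $\{0,1\}^\Z$ is a mixing SFT, this is equivalent to non-regularity by Theorem~\ref{thm:SplitEpicSolved}. First I would read off the local rule from the number: $58 = 00111010_2$ gives $f(x)_i = 1 \iff x_{[i-1,i+1]} \in \{001,011,100,101\}$. The two fixed points collapse, $f(0^\Z) = f(1^\Z) = 0^\Z$ (and one checks $1^\Z \notin X$), so we are in the situation of ECA $28$ and Example~\ref{ex:XOR}: any right inverse $g : X \to \{0,1\}^\Z$ must send $0^\Z \in X$ to one of its two period-$1$ preimages, i.e. $g(0^\Z) \in \{0^\Z, 1^\Z\}$, and it suffices to rule out both choices. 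As for ECA $27$ and $41$, the image $X$ is proper sofic, so Proposition~\ref{prop:soficnotsplit} already yields non-regularity; I would nevertheless give the direct periodic-point argument, to avoid exhibiting the automaton and to illustrate the method.

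The structural observation I would isolate concerns preimages over a region where the image reads all $0$. Whenever $f(x)_i = 0$ the producing window lies in $\{000,010,110,111\}$, and from this one sees that on any maximal run of $0$s in the image the preimage is locally constant, equal to $0^\Z$ or to $1^\Z$, switching value only at coordinates where the image has a $1$. Reconstructing right to left, this forces a clean dichotomy on the trailing shape of the image: a preimage right-asymptotic to $0^\Z$ necessarily produces a trailing \emph{isolated} $1$ (image ending $\ldots 0\,1\,0^\infty$, since the window over the coordinate left of the rightmost $1$ is of the form $\ast\,1\,0$ and outputs $0$), whereas a preimage right-asymptotic to $1^\Z$ necessarily produces a trailing \emph{pair} $11$ (image ending $\ldots 1\,1\,0^\infty$, since the window left of the rightmost $1$ is of the form $\ast\,0\,1$ and outputs $1$). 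This correspondence is the crux of the whole argument.

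With it in hand the two cases are dispatched by two explicit points of $X$, each left- and right-asymptotic to $0^\Z$. To exclude $g(0^\Z) = 1^\Z$ I would take the point $x$ with $1$s exactly at coordinates $-1$ and $1$, i.e. $\ldots 0\,1\,0\,1\,0 \ldots$, which is in $X$ because it is the image of the configuration with a single $1$; its rightmost $1$ is isolated, so by the dichotomy $x$ has no preimage right-asymptotic to $1^\Z$, contradicting $g(0^\Z)=1^\Z$. To exclude $g(0^\Z) = 0^\Z$ I would take the point with $1$s exactly at coordinates $-1$ and $0$, i.e. $\ldots 0\,1\,1\,0 \ldots$, which is in $X$ because it is the image of the transition configuration $w_i = 0$ for $i<0$, $w_i = 1$ for $i \geq 0$; its rightmost block of $1$s has length two, so by the dichotomy it has no preimage right-asymptotic to $0^\Z$, contradicting $g(0^\Z)=0^\Z$. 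Since neither value of $g(0^\Z)$ is consistent, $f$ has no right inverse onto its image and is not regular.

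The routine parts are the local-rule computation and the two short right-to-left checks. The one place needing care is the structural observation of the middle paragraph: I must make sure the ``locally constant on $0$-runs'' claim is applied correctly, so that the asymptotic value of a preimage tail is genuinely forced by the $\ldots 0\,1\,0^\infty$ versus $\ldots 1\,1\,0^\infty$ trailing shape. It is precisely this dichotomy, rather than any single isolated local clash, that lets one configuration kill $g(0^\Z)=0^\Z$ and a second kill $g(0^\Z)=1^\Z$ simultaneously.
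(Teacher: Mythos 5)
Your proof is correct and takes essentially the same approach as the paper: both arguments refute the strong periodic point condition at $p=1$ by ruling out each candidate value $g(0^\Z) \in \{0^\Z, 1^\Z\}$ with a witness point of the image that is asymptotic to $0^\Z$, using the dichotomy that a preimage right-asymptotic to $0^\Z$ forces an isolated trailing $1$ while one right-asymptotic to $1^\Z$ forces a trailing $11$. The only cosmetic differences are your first witness ($1$s at coordinates $-1$ and $1$, where the paper uses a single isolated $1$) and that you spell out the right-to-left reconstruction that the paper leaves implicit.
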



\begin{proof}
Let $f$ be the ECA $58$, i.e. $f(x)_i = 1 \iff x_{[i-1,i+1]} \in \{001, 011, 100, 101\}$. The image $X$ of $f$ is proper sofic, we omit the automaton. Again Proposition~\ref{prop:soficnotsplit} would also yield the result.

The point $0^\Z$ has two $1$-periodic preimages. We show neither choice satisfies the strong periodic point condition: if $g(0^\Z) = 1^\Z$, then $g$ cannot give a preimage for
\[ ...000000010000000... \]
If $g(0^\Z) = 0^\Z$, then it cannot give a preimage for
\[ ...000000011000000... \]
It is easy to find preimages for these two configurations, however, so ECA 58 is not regular.
\end{proof}

%




\section{Notes}
\label{sec:notes}

\subsection{How were the inverses found?}

We did not know what to expect when we embarked on this journey (apart from the fact that a certain tower of exponentials we did not dare evaluate is an upper bound on the radius of the inverses). One could probably find the inverses more directly now that we know that they are very small. Here, we explain the messy process by which we found them. This section is not important for verifying the results.


We worked through the ECA listed in \cite{CaGa18} in increasing numerical order, so we began with ECA 6. Let $f : \{0,1\}^\Z \to \{0,1\}^\Z$ be ECA $6$. It seemed immediately clear that it is easier to work with split epicness than regularity: computationally, this is easier since we never have to compute the large-radius CA $f \circ g \circ f$, and for proofs it was easier since the theory of \cite{SaTo15a} was developed in this setting.

Thus, the first step was to compute the image of $f$. It is an exercise in symbolic dynamical automata theory to check that $X = f(\{0,1\}^\Z)$ is the subshift of finite type with forbidden patterns $111, 100101, 1001001, 11000101, 110001001$. We computed this by finding a minimal DFA for the image using SAGE \cite{Sage}, computed its regular language of minimal forbidden patterns using closure properties, and found the minimal automaton is acyclic and accepts the given list of finite words. Now, it was enough to define $g$ on words that do not contain these forbidden patterns, and verify that $f \circ g$ behaves as the identity function this SFT. 

At this point, we manually applied the algorithm in \cite{SaTo15a} to show that if $g : X \to \{0,1\}^\Z$ is a right inverse for $f : \{0,1\}^\Z \to X$, then $g(0^\Z) = 1^\Z$. Namely, with the a priori possible choice $g(0^\Z) = 0^\Z$, the strong periodic point condition fails: the eventually periodic point $...000101000...$ has no preimage consistent with this choice. No other restrictions were found at this point, and we quickly convinced ourselves that there indeed was a shift-invariant continuous procedure for finding inverses. However, proving that it works in all cases (and even fully explaining the rule) sounded like a daunting task, so we turned back to the computer.

We guessed from experimenting that radius $5$ is sufficient, and we guessed that it is safe to force $g(x.0000y) = x'.a11by'$ for any tails $x, y$, since $g(0^\Z) = 1^\Z$ was guaranteed.

We then implemented basic ``sudoku deduction rules'' on the computer to determine the possible symbol that can be seen in a particular cell of a preimage configuration -- the computer looked through all legal words of length $11$, finding the possible symbols that can appear on top of a particular symbol (starting with the ``seed'' rule $0000 \mapsto 11$). The author then looked at configurations that were not yet assigned a preimage, and followed the intuitive procedure to invent suitable preimages (and the definition was automatically extended by the sudoku rule). First only provably ``safe'' choices were made by proving that a particular choice of preimage symbol is more general than another, and in the end a few ad hoc rules were introduced by guesswork.

Once the rule was specified for all words of length 11, it turned out to indeed be a right inverse for $f$, which was easy to verify by computer. We then carried out another joint effort by human and computer to simplify the presentation of $g$ to the rules in Figure~\ref{fig:rule6inverse}. This presentation is a local minimum among such presentations, in the sense that one cannot remove any of the individual rules or bits in the presentation of the rule. Presumably the rule has a simpler representation globally, and note that only its values on the image subshift of $f$ actually matter. We did not check whether it has a weak inverse of radius $3$.

For other rules, we proceeded similarly, and were able to conclude all cases except rule 57. For rules 33 and 77, after finding them we manually added some symmetry to the presentations of the weak inverses, to make them slightly more redundant but perhaps easier to remember and apply. For ECA $7$ we originally found an inverse of radius $4$, but while proving that ECA 6 does not have an inverse of radius $2$ we found (fully with pen and paper) that ECA 7 in fact has a unique inverse of radius $2$ (which is essentially an elementary CA), which we report here.

The inverse for CA 57 was more difficult one to find. To find it, we used the same procedure we used to prove rule $6$ has no inverse of radius $2$, whose strength lies somewhere between the observation of \cite{CaGa18} that periodic points should have preimages of the same period, and the full strong periodic point condition of \cite{SaTo15a}: We computed periodic points with periods $p = 1, ..., 20$, and for each point, we looked through its preimages of the same period, and if all of them had the same bit $b$ in a particular position, we knew the preimage of this periodic point had to have $b$ in that position, which forced the image of the word in the corresponding point of the orbit of that periodic point. This simple procedure picked, after a few minutes of runtime (with my highly non-optimized implementation in Python), the images of all but 36 of the words of length $7$. These words were, not surprisingly, of high period, and thus could occur next to each other in only a few different ways, so it was easy to manually pick images for them yielding an inverse.

According to \cite{CaGa18}, none of the ECA 6, 7, 23, 33, 57 and 77 have an elementary weak inverse, making our inverses optimal in terms of radius apart from ECA 6 and 57.

The ECA 6 does not have a weak inverse of radius $2$ by a simple proof: Let $f$ be ECA 6 and suppose $g$ is a weak inverse of radius $2$. We have $f((1100)^\Z) = (0001)^\Z$ and the point $(0001)^\Z$ has no other $f$-preimage of period $4$. We have $f((00001)^\Z) = (00011)^\Z$ and the point $(00011)^\Z$ has no other $f$-preimage of period $5$. It follows that $g((0001)^\Z) = (1100)^\Z$ and $g((00011)^\Z) = (00001)^\Z$, which contradicts $g$ having radius $2$ since the local rule of $g$ must map the word $10001$ to both $0$ and $1$. We did not attempt to find a radius $3$ inverse.

For rule 57, the computer reported a contradiction (double-definition of the local rule) with radius $3$ when applying the above-mentioned procedure, so we believe the current radius is optimal. However, this should not be blindly trusted -- this part of the program was not written very rigorously. We did not find the optimality important enough to look deeper into this -- it would be much more interesting to find a simple natural reason for regularity than to learn the optimal radius. Currently we do not know how to prove such a thing other than by stumbling upon an inverse. Even if optimality is looked for, it would be more interesting (to the author) to measure the smallest possible Kolmogorov complexity of an inverse rule (in some suitable sense) rather than its radius. See the discussion of presentation of clopen sets below.


\subsection{How were the non-invertibility proofs found?}

For CA not having right inverses, we simply investigated the rule with pen and paper, checking the strong periodic point condition of \cite{SaTo15a}, which was contradicted in each case by points of period one. We again used SAGE to help find the images of the CA.

One gets the feeling that the Strong Law of Small Numbers \cite{Gu88} is at play in the proofs: the patterns $011$, $100$, $110$ and $100$ that break a unary period play a very special role, and they make up half the patterns we can see. (And of course, the other patterns are part of a two-period or a one-period!) With a larger radius, these phenomena become atypical, in the sense that most words do not have anything to do with small periods. Based on this, one might conjecture that regularity is common in cellular automata as radius grows.

\subsection{SFT images}

We needed to compute the images of the 11 ECA, as in the split epic case we needed this to know which words we need to check $fg = \ID$ on. We list the forbidden patterns that we found for SFT cases. They were obtained by computing a DFA for the forbidden words of minimal length, observing it is acyclic and listing the words it accepts. The correctness of these patterns has been checked quite carefully.


\begin{itemize}
\item ECA $6$: $111, 100101, 1001001, 11000101, 110001001$.
st
\item ECA $7$: $1001$.

\item ECA $9$: $1011, 10101, 11001, 11000011, 110000101$.

\item ECA $23$: $0100010, 01001, 01101, 10010, 10110, 1011101$.

\item ECA $28$: $111$.

\item ECA $33$: $1011, 1101, 110011$.

\item ECA $57$: $001000, 111000, 111011, 1110101000$.

\item ECA $77$: $00011, 00111, 11000, 11100, 0001000, 1110111$.
\end{itemize}

\noindent My notes say that ECA $27, 41, 58$ have proper sofic images, but this has not been double checked.

\subsection{Presentation of rules}

The ``first case that applies'' way of representing a rule (used in the Appendix) turned out a very nice was to present weak inverses at least for rules 7, 33 and 77, and allows one to apply the rule manually. The preimage of the cylinder $[1]_0$ under a cellular automaton is a clopen set in the Cantor topology that fully describes it. In our figures we essentially represent such a cylinder as an alternating series
\[ (((... \setminus C_4) \cup C_3) \setminus C_2) \cup C_1 \]
where each $C_i$ is a clopen set in ``disjunctive normal form'', in the sense that it is a union
\[ C_i = D_{i, 1} \cup D_{i, 2} \cup \ldots \cup D_{i, \ell_i} \]
where each
$D_{i, j}$ is an intersection
\[ D_{i, j} = E_{i, j, 1} \cap E_{i, j, 2} \cap \ldots \cap E_{i, j, m_{i, j}} \]
where each $E_{i, j, k}$ is a primitive cylinder $[a]_n$, $a \in \{0,1\}$, $n \in \Z$.

Trying to optimize the number of $C_i$ (the number of alternations of the image bit), the numbers of $D_{i, j}$ (the number of consecutive patterns giving the same bit), and $E_{i,j,k}$ (the numbers of bits specified in a particular pattern) was a lot of fun, but we probably did not optimize them very well.

We also report canonical normal forms in a copy-pasteable form. Choices we considered were decimal (for historical reasons), and bases 16 and 64, but $64 = 2^6$ is not a power tower of $2$ so some structure is lost when using this base. We were surprised by how nice the hexadecimal representation of our inverse for rule 6 looks like in Figure~\ref{fig:rule6inverse}. 
Before trying it out, we had no idea the hexadecimal number would look so regular,\footnote{For comparison, the decimal representation is
\printnumber{26726698204457173985426838654351442812433064955617761561473203597467507804800954948762245864086253298778100885350128103992238993950737242632106437103670939477292080025923981912527590821363309293961298550554954008566332940141141648018729855972959752575589815203856126383648545421716882756091523381830157084652250898696805101034541122348061254604489384065071450200862112014756098843601647635924480904317129640356170620518951557945765409802497497026962812368891989993857798437967198531425023334480163227761193970790742273253443184461707102877712782007076038517756020200420494324263875270861193215}
} as it was found by a (human-guided) brute force computer search, and our presentation of the rule in Figure~\ref{fig:rule6inverse} was quite lengthy. We did not try to find a shorter presentation by drawing this number, as tempting as it sounds.

\section*{Acknowledgements}

We thank Jarkko Kari for observing that Proposition~\ref{prop:SurjectiveNotInjective} works in all dimensions -- my original proof was specific to one dimension. We thank Johan Kopra for pointing out that Lemma~\ref{lem:InjImpliesSurj} is easier to prove than to find in \cite{LiMa95}.

\bibliographystyle{plain}
\bibliography{../../../bib/bib}{}

\newpage

\appendix

\section{Right inverse of ECA 6}

\begin{figure}[h!]
\[ \begin{tikzpicture}[scale = 0.4] \draw (0,0) grid (11,1); \draw (5,1) rectangle (6,2); \node at (6.5,0.5) {1}; \node at (5.5,1.5) {0}; \end{tikzpicture}\;\;\;\;\begin{tikzpicture}[scale = 0.4] \draw (0,0) grid (11,1); \draw (5,1) rectangle (6,2); \node at (3.5,0.5) {1}; \node at (4.5,0.5) {1}; \node at (6.5,0.5) {0}; \node at (5.5,1.5) {0}; \end{tikzpicture} \] 
\[ \begin{tikzpicture}[scale = 0.4] \draw (0,0) grid (11,1); \draw (5,1) rectangle (6,2); \node at (3.5,0.5) {0}; \node at (4.5,0.5) {0}; \node at (5.5,0.5) {0}; \node at (5.5,1.5) {1}; \end{tikzpicture}\;\;\;\;\begin{tikzpicture}[scale = 0.4] \draw (0,0) grid (11,1); \draw (5,1) rectangle (6,2); \node at (4.5,0.5) {1}; \node at (5.5,0.5) {1}; \node at (5.5,1.5) {1}; \end{tikzpicture} \] 
\[ \begin{tikzpicture}[scale = 0.4] \draw (0,0) grid (11,1); \draw (5,1) rectangle (6,2); \node at (4.5,0.5) {1}; \node at (7.5,0.5) {0}; \node at (8.5,0.5) {1}; \node at (5.5,1.5) {1}; \end{tikzpicture}\;\;\;\;\begin{tikzpicture}[scale = 0.4] \draw (0,0) grid (11,1); \draw (5,1) rectangle (6,2); \node at (5.5,0.5) {1}; \node at (7.5,0.5) {1}; \node at (5.5,1.5) {1}; \end{tikzpicture} \] 
\[ \begin{tikzpicture}[scale = 0.4] \draw (0,0) grid (11,1); \draw (5,1) rectangle (6,2); \node at (2.5,0.5) {0}; \node at (3.5,0.5) {1}; \node at (5.5,0.5) {0}; \node at (5.5,1.5) {1}; \end{tikzpicture}\;\;\;\;\begin{tikzpicture}[scale = 0.4] \draw (0,0) grid (11,1); \draw (5,1) rectangle (6,2); \node at (2.5,0.5) {1}; \node at (3.5,0.5) {0}; \node at (5.5,1.5) {0}; \end{tikzpicture} \] 
\[ \begin{tikzpicture}[scale = 0.4] \draw (0,0) grid (11,1); \draw (5,1) rectangle (6,2); \node at (2.5,0.5) {1}; \node at (5.5,0.5) {0}; \node at (7.5,0.5) {0}; \node at (5.5,1.5) {1}; \end{tikzpicture}\;\;\;\;\begin{tikzpicture}[scale = 0.4] \draw (0,0) grid (11,1); \draw (5,1) rectangle (6,2); \node at (5.5,0.5) {1}; \node at (8.5,0.5) {0}; \node at (9.5,0.5) {1}; \node at (5.5,1.5) {0}; \end{tikzpicture} \] 
\[ \begin{tikzpicture}[scale = 0.4] \draw (0,0) grid (11,1); \draw (5,1) rectangle (6,2); \node at (1.5,0.5) {0}; \node at (2.5,0.5) {0}; \node at (4.5,0.5) {0}; \node at (5.5,1.5) {1}; \end{tikzpicture}\;\;\;\;\begin{tikzpicture}[scale = 0.4] \draw (0,0) grid (11,1); \draw (5,1) rectangle (6,2); \node at (0.5,0.5) {1}; \node at (5.5,0.5) {0}; \node at (7.5,0.5) {0}; \node at (5.5,1.5) {1}; \end{tikzpicture} \] 
\[ \begin{tikzpicture}[scale = 0.4] \draw (0,0) grid (11,1); \draw (5,1) rectangle (6,2); \node at (1.5,0.5) {1}; \node at (4.5,0.5) {1}; \node at (5.5,1.5) {1}; \end{tikzpicture}\;\;\;\;\begin{tikzpicture}[scale = 0.4] \draw (0,0) grid (11,1); \draw (5,1) rectangle (6,2); \node at (5.5,0.5) {0}; \node at (5.5,1.5) {0}; \end{tikzpicture} \] 
\[ \begin{tikzpicture}[scale = 0.4]
\draw (0,0) grid (11,1); \draw (5,1) rectangle (6,2);
\node at (3.5,0.5) {0};
\node at (8.5,0.5) {0};
\node at (5.5,1.5) {0};
\end{tikzpicture}\;\;\;\;
\begin{tikzpicture}[scale = 0.4]
\draw (0,0) grid (11,1); \draw (5,1) rectangle (6,2); \node at (5.5,1.5) {1};
\end{tikzpicture} \] 
\caption{A weak generalized inverse of ECA $6$. The rules are applied row by row, and on each row from left to right. An empty box denotes a wildcard symbol, and the first rule to apply is used. The rightmost coordinate is not actually read by any rule. This is the radius 5 binary CA with the hex number
\centerline{\texttt{00000000000000000000FFF3000000FF0000FFFF000000F00000FF000000FFFF}}
\centerline{\texttt{00000000000000000000FFF30000FFFF0000FFFF0000FFFF0000FFF00000FFFF}}
\centerline{\texttt{00000000000000000000FFF3000000FF0000FFFF000000F00000FF000000FFFF}}
\centerline{\texttt{00000000000000000000FFF30000FFFF0000FFFF000000FF0000FFF30000FFFF}}
\centerline{\texttt{00000000000000000000FFF3000000FF0000FFFF000000F00000FF000000FFFF}}
\centerline{\texttt{00000000000000000000FFF30000FFFF0000FFFF0000FFFF0000FFF00000FFFF}}
\centerline{\texttt{00000000000000000000FFF3000000FF0000FFFF000000F00000FF000000FFFF}}
\centerline{\texttt{00000000000000000000FFF30000FFFF0000FFFF000000F00000FFF30000FFFF}}
}
\label{fig:rule6inverse}
\end{figure} 

\newpage

\section{Right inverse of ECA 7}

\begin{figure}[h!]
\[ \begin{tikzpicture}[scale = 0.4] \draw (0,0) grid (5,1); \draw (2,1) rectangle (3,2); \node at (2.5,0.5) {0}; \node at (3.5,0.5) {0}; \node at (2.5,1.5) {1}; \end{tikzpicture}\;\;\;\;\begin{tikzpicture}[scale = 0.4] \draw (0,0) grid (5,1); \draw (2,1) rectangle (3,2); \node at (2.5,0.5) {0}; \node at (3.5,0.5) {1}; \node at (2.5,1.5) {0}; \end{tikzpicture}\;\;\;\;\begin{tikzpicture}[scale = 0.4] \draw (0,0) grid (5,1); \draw (2,1) rectangle (3,2); \node at (3.5,0.5) {0}; \node at (4.5,0.5) {1}; \node at (2.5,1.5) {1}; \end{tikzpicture}\;\;\;\;\begin{tikzpicture}[scale = 0.4] \draw (0,0) grid (5,1); \draw (2,1) rectangle (3,2); \node at (2.5,1.5) {0}; \end{tikzpicture} \] 
\caption{A weak generalized inverse of ECA $7$. Radius 2 binary CA with the number \texttt{23232323}, equal to ECA $35$ composed with $\sigma$.}
\label{fig:rule7inverse}
\end{figure} 

\section{Right inverse of ECA 23}

\begin{figure}[h!]

\[ \begin{tikzpicture}[scale = 0.4]
\draw (1,0) grid (6,1); \draw (3,1) rectangle (4,2); \node at (1.5,0.5) {1}; \node at (2.5,0.5) {0}; \node at (3.5,1.5) {1}; \end{tikzpicture}\;\;\;\;
\begin{tikzpicture}[scale = 0.4] \draw (1,0) grid (6,1); \draw (3,1) rectangle (4,2); \node at (3.5,0.5) {1}; \node at (4.5,0.5) {1}; \node at (3.5,1.5) {0}; \end{tikzpicture}\;\;\;\;
\begin{tikzpicture}[scale = 0.4] \draw (1,0) grid (6,1); \draw (3,1) rectangle (4,2); \node at (2.5,0.5) {1}; \node at (4.5,0.5) {1}; \node at (3.5,1.5) {0}; \end{tikzpicture}\;\;\;\;
\begin{tikzpicture}[scale = 0.4] \draw (1,0) grid (6,1); \draw (3,1) rectangle (4,2); \node at (2.5,0.5) {0}; \node at (3.5,0.5) {1}; \node at (3.5,1.5) {1}; \end{tikzpicture}\;\;\;\;
\begin{tikzpicture}[scale = 0.4] \draw (1,0) grid (6,1); \draw (3,1) rectangle (4,2); \node at (4.5,0.5) {1}; \node at (5.5,0.5) {0}; \node at (3.5,1.5) {0}; \end{tikzpicture}\]

\[\begin{tikzpicture}[scale = 0.4] \draw (1,0) grid (6,1); \draw (3,1) rectangle (4,2); \node at (3.5,0.5) {1}; \node at (5.5,0.5) {0}; \node at (3.5,1.5) {0}; \end{tikzpicture} \;\;\;\;
\begin{tikzpicture}[scale = 0.4] \draw (1,0) grid (6,1); \draw (3,1) rectangle (4,2); \node at (1.5,0.5) {1}; \node at (3.5,1.5) {1}; \end{tikzpicture}\;\;\;\;\begin{tikzpicture}[scale = 0.4] \draw (1,0) grid (6,1); \draw (3,1) rectangle (4,2); \node at (2.5,0.5) {1}; \node at (3.5,1.5) {0}; \end{tikzpicture}\;\;\;
\begin{tikzpicture}[scale = 0.4] \draw (1,0) grid (6,1); \draw (3,1) rectangle (4,2); \node at (3.5,1.5) {1}; \end{tikzpicture} \] 

\caption{A weak generalized inverse of ECA $23$. Radius 2 binary CA with the number \texttt{23FF003B}.}
\label{fig:rule23inverse}
\end{figure} 

\section{Right inverse of ECA 33}

\begin{figure}[h!]
\[ \begin{tikzpicture}[scale = 0.4] \draw (3,0) grid (8,1); \draw (5,1) rectangle (6,2); \node at (5.5,0.5) {1}; \node at (5.5,1.5) {0}; \end{tikzpicture}\;\;\;\;
\begin{tikzpicture}[scale = 0.4] \draw (3,0) grid (8,1); \draw (5,1) rectangle (6,2); \node at (4.5,0.5) {0}; \node at (6.5,0.5) {0}; \node at (5.5,1.5) {1}; \end{tikzpicture} \;\;\;\;
\begin{tikzpicture}[scale = 0.4] \draw (3,0) grid (8,1); \draw (5,1) rectangle (6,2); \node at (4.5,0.5) {1}; \node at (6.5,0.5) {1}; \node at (5.5,1.5) {1}; \end{tikzpicture} \]
\[ \begin{tikzpicture}[scale = 0.4] \draw (3,0) grid (8,1); \draw (5,1) rectangle (6,2);
\node at (3.5,0.5) {1};
\node at (4.5,0.5) {1};
\node at (6.5,0.5) {0};
\node at (5.5,1.5) {0}; \end{tikzpicture} \;\;\;\;
\begin{tikzpicture}[scale = 0.4] \draw (3,0) grid (8,1); \draw (5,1) rectangle (6,2);
\node at (4.5,0.5) {0};
\node at (6.5,0.5) {1};
\node at (7.5,0.5) {1};
\node at (5.5,1.5) {0};
\end{tikzpicture}\;\;\;\;
\begin{tikzpicture}[scale = 0.4] \draw (3,0) grid (8,1); \draw (5,1) rectangle (6,2); \node at (5.5,1.5) {1}; \end{tikzpicture} \] 
\caption{A weak generalized inverse of ECA $33$. Radius 2 binary CA with the number \texttt{0C070F07}.}
\label{fig:rule33inverse}
\end{figure}

\newpage

\section{Right inverse of ECA 57}

\begin{figure}[h!]
\[ \begin{tikzpicture}[scale = 0.4] \draw (0,0) grid (9,1); \draw (4,1) rectangle (5,2); \node at (4.5,0.5) {1}; \node at (5.5,0.5) {1}; \node at (4.5,1.5) {0}; \end{tikzpicture}\;\;\;\;\begin{tikzpicture}[scale = 0.4] \draw (0,0) grid (9,1); \draw (4,1) rectangle (5,2); \node at (3.5,0.5) {0}; \node at (4.5,0.5) {0}; \node at (4.5,1.5) {1}; \end{tikzpicture}\;\;\;\;\begin{tikzpicture}[scale = 0.4] \draw (0,0) grid (9,1); \draw (4,1) rectangle (5,2); \node at (5.5,0.5) {1}; \node at (6.5,0.5) {1}; \node at (4.5,1.5) {1}; \end{tikzpicture} \] 
\[ \begin{tikzpicture}[scale = 0.4] \draw (0,0) grid (9,1); \draw (4,1) rectangle (5,2); \node at (2.5,0.5) {1}; \node at (3.5,0.5) {1}; \node at (6.5,0.5) {1}; \node at (4.5,1.5) {0}; \end{tikzpicture}\;\;\;\;\begin{tikzpicture}[scale = 0.4] \draw (0,0) grid (9,1); \draw (4,1) rectangle (5,2); \node at (2.5,0.5) {0}; \node at (3.5,0.5) {0}; \node at (4.5,1.5) {0}; \end{tikzpicture}\;\;\;\;\begin{tikzpicture}[scale = 0.4] \draw (0,0) grid (9,1); \draw (4,1) rectangle (5,2); \node at (4.5,0.5) {0}; \node at (5.5,0.5) {0}; \node at (6.5,0.5) {0}; \node at (4.5,1.5) {1}; \end{tikzpicture} \] 
\[ \begin{tikzpicture}[scale = 0.4] \draw (0,0) grid (9,1); \draw (4,1) rectangle (5,2); \node at (1.5,0.5) {1}; \node at (2.5,0.5) {1}; \node at (3.5,0.5) {1}; \node at (4.5,1.5) {0}; \end{tikzpicture}\;\;\;\;\begin{tikzpicture}[scale = 0.4] \draw (0,0) grid (9,1); \draw (4,1) rectangle (5,2); \node at (1.5,0.5) {1}; \node at (3.5,0.5) {0}; \node at (6.5,0.5) {0}; \node at (4.5,1.5) {1}; \end{tikzpicture}\;\;\;\;\begin{tikzpicture}[scale = 0.4] \draw (0,0) grid (9,1); \draw (4,1) rectangle (5,2); \node at (1.5,0.5) {0}; \node at (4.5,0.5) {0}; \node at (5.5,0.5) {0}; \node at (4.5,1.5) {0}; \end{tikzpicture} \] 
\[ \begin{tikzpicture}[scale = 0.4] \draw (0,0) grid (9,1); \draw (4,1) rectangle (5,2); \node at (3.5,0.5) {1}; \node at (4.5,0.5) {1}; \node at (6.5,0.5) {1}; \node at (7.5,0.5) {1}; \node at (4.5,1.5) {1}; \end{tikzpicture}\;\;\;\;\begin{tikzpicture}[scale = 0.4] \draw (0,0) grid (9,1); \draw (4,1) rectangle (5,2); \node at (2.5,0.5) {0}; \node at (4.5,0.5) {1}; \node at (7.5,0.5) {1}; \node at (4.5,1.5) {0}; \end{tikzpicture}\;\;\;\;\begin{tikzpicture}[scale = 0.4] \draw (0,0) grid (9,1); \draw (4,1) rectangle (5,2); \node at (0.5,0.5) {1}; \node at (1.5,0.5) {0}; \node at (2.5,0.5) {1}; \node at (4.5,0.5) {1}; \node at (6.5,0.5) {1}; \node at (4.5,1.5) {0}; \end{tikzpicture} \] 
\[ \begin{tikzpicture}[scale = 0.4] \draw (0,0) grid (9,1); \draw (4,1) rectangle (5,2); \node at (1.5,0.5) {1}; \node at (2.5,0.5) {0}; \node at (3.5,0.5) {1}; \node at (5.5,0.5) {1}; \node at (7.5,0.5) {1}; \node at (4.5,1.5) {1}; \end{tikzpicture}\;\;\;\;\begin{tikzpicture}[scale = 0.4] \draw (0,0) grid (9,1); \draw (4,1) rectangle (5,2); \node at (2.5,0.5) {0}; \node at (5.5,0.5) {1}; \node at (7.5,0.5) {1}; \node at (8.5,0.5) {1}; \node at (4.5,1.5) {0}; \end{tikzpicture}\;\;\;\;\begin{tikzpicture}[scale = 0.4] \draw (0,0) grid (9,1); \draw (4,1) rectangle (5,2); \node at (5.5,0.5) {1}; \node at (7.5,0.5) {1}; \node at (4.5,1.5) {1}; \end{tikzpicture} \] 
\[ \begin{tikzpicture}[scale = 0.4] \draw (0,0) grid (9,1); \draw (4,1) rectangle (5,2); \node at (5.5,0.5) {1}; \node at (8.5,0.5) {0}; \node at (4.5,1.5) {0}; \end{tikzpicture}\;\;\;\;\begin{tikzpicture}[scale = 0.4] \draw (0,0) grid (9,1); \draw (4,1) rectangle (5,2); \node at (2.5,0.5) {0}; \node at (6.5,0.5) {0}; \node at (4.5,1.5) {1}; \end{tikzpicture}\;\;\;\;\begin{tikzpicture}[scale = 0.4] \draw (0,0) grid (9,1); \draw (4,1) rectangle (5,2); \node at (2.5,0.5) {1}; \node at (4.5,0.5) {1}; \node at (6.5,0.5) {0}; \node at (7.5,0.5) {0}; \node at (4.5,1.5) {1}; \end{tikzpicture} \] 
\[ \begin{tikzpicture}[scale = 0.4] \draw (0,0) grid (9,1); \draw (4,1) rectangle (5,2); \node at (6.5,0.5) {0}; \node at (4.5,1.5) {0}; \end{tikzpicture}\;\;\;\;\begin{tikzpicture}[scale = 0.4] \draw (0,0) grid (9,1); \draw (4,1) rectangle (5,2); \node at (0.5,0.5) {0}; \node at (1.5,0.5) {1}; \node at (2.5,0.5) {1}; \node at (4.5,1.5) {0}; \end{tikzpicture}\;\;\;\;\begin{tikzpicture}[scale = 0.4] \draw (0,0) grid (9,1); \draw (4,1) rectangle (5,2); \node at (2.5,0.5) {1}; \node at (6.5,0.5) {1}; \node at (7.5,0.5) {1}; \node at (4.5,1.5) {1}; \end{tikzpicture} \] 
\[ \begin{tikzpicture}[scale = 0.4] \draw (0,0) grid (9,1); \draw (4,1) rectangle (5,2); \node at (3.5,0.5) {1}; \node at (4.5,0.5) {1}; \node at (6.5,0.5) {1}; \node at (8.5,0.5) {1}; \node at (4.5,1.5) {1}; \end{tikzpicture}\;\;\;\;\begin{tikzpicture}[scale = 0.4] \draw (0,0) grid (9,1); \draw (4,1) rectangle (5,2); \node at (0.5,0.5) {0}; \node at (4.5,1.5) {0}; \end{tikzpicture}\;\;\;\;\begin{tikzpicture}[scale = 0.4] \draw (0,0) grid (9,1); \draw (4,1) rectangle (5,2); \node at (2.5,0.5) {0}; \node at (4.5,0.5) {1}; \node at (4.5,1.5) {0}; \end{tikzpicture} \] 
\[ \begin{tikzpicture}[scale = 0.4] \draw (0,0) grid (9,1); \draw (4,1) rectangle (5,2); \node at (4.5,1.5) {1}; \end{tikzpicture} \] 
\caption{A weak generalized inverse of ECA $57$. The rules are applied row by row, and on each row from left to right -- the first rule to apply is used, an empty box denoting a wildcard symbol. This is the radius 4 binary CA with the number
\centerline{\texttt{0000F00F00FFFFFF00E3FEFF0000FFFF0003FC0F0003FFFF00E3F60F0000FFFF}}
\centerline{\texttt{0000F00F000FFFFF00E3FE0F0000FFFF0003FC0F00C3FFFF00E3F60F0000FFFF}}
}
\label{fig:rule57inverse}
\end{figure}

\section{Right inverse of ECA 77}

\begin{figure}[h!]

\[ \begin{tikzpicture}[scale = 0.4] \draw (3,0) grid (8,1); \draw (5,1) rectangle (6,2); \node at (4.5,0.5) {0}; \node at (6.5,0.5) {0}; \node at (5.5,1.5) {1}; \end{tikzpicture}\;\;\;\;\begin{tikzpicture}[scale = 0.4] \draw (3,0) grid (8,1); \draw (5,1) rectangle (6,2); \node at (4.5,0.5) {1}; \node at (6.5,0.5) {1}; \node at (5.5,1.5) {0}; \end{tikzpicture} \;\;\;\;
\begin{tikzpicture}[scale = 0.4] \draw (3,0) grid (8,1); \draw (5,1) rectangle (6,2); \node at (3.5,0.5) {0}; \node at (7.5,0.5) {0}; \node at (5.5,1.5) {1}; \end{tikzpicture}\]
\[ \begin{tikzpicture}[scale = 0.4] \draw (3,0) grid (8,1); \draw (5,1) rectangle (6,2); \node at (3.5,0.5) {1}; \node at (7.5,0.5) {1}; \node at (5.5,1.5) {0}; \end{tikzpicture} \;\;\;\;
 \begin{tikzpicture}[scale = 0.4] \draw (3,0) grid (8,1); \draw (5,1) rectangle (6,2); \node at (5.5,0.5) {0}; \node at (5.5,1.5) {0}; \end{tikzpicture}\;\;\;\;\begin{tikzpicture}[scale = 0.4] \draw (3,0) grid (8,1); \draw (5,1) rectangle (6,2); \node at (5.5,0.5) {1}; \node at (5.5,1.5) {1}; \end{tikzpicture} \] 

\caption{A weak generalized inverse of ECA $77$. Radius 5 binary CA with the number \texttt{107331F7}.}
\label{fig:rule77inverse}
\end{figure}

\section{Code for checking the rules}
\label{sec:Code}

\lstinputlisting[language=Python]{test_rules.py}

\section{Code for checking the numbers}
\label{sec:CodeNum}

\lstinputlisting[language=Python]{test_numbers.py}

\end{document}